\let\csname equation*\endcsname\relax
\let\csname endequation*\endcsname\relax
\tikzset{
	>=triangle 45,
	box/.style={draw = black, rectangle, rounded corners, inner sep=4pt,fill=white,text=black, text centered},
	arro/.style={line width=1pt,gray,->,shorten <=2pt,shorten >=1pt}
}
\newcommand{\be}{\begin{equation}}
\newcommand{\ee}{\end{equation}}
\def\reff#1{(\protect\ref{#1})}
\newcommand{\Mod}[1]{\ (\mathrm{mod}\ #1)}
\newtheorem{lemma}{Lemma}
\newtheorem{defn}{Definition}[section]
\newtheorem{pros}{Proposition}[section]
\newtheorem{corollary}{Corollary}
\begin{document}
	
\title{Average optimal cost for the Euclidean TSP in one dimension}

\author{Sergio Caracciolo$^1$, Andrea Di Gioacchino$^1$,
	Enrico M. Malatesta$^1$ and Carlo Vanoni$^1$ }
\address{$^1$ Dipartimento di Fisica, University of Milan and INFN, via Celoria 16, 20133 Milan, Italy}
\ead{sergio.caracciolo@mi.infn.it, andrea.digioacchino@unimi.it, enrico.m.malatesta@gmail.com and carlo.vanoni@studenti.unimi.it}


\begin{abstract}
The traveling-salesman problem is one of the most studied combinatorial optimization problems, because of the simplicity in its statement and the difficulty in its solution. 
We study the traveling salesman problem when the positions of the cities are chosen at random in the unit interval and the cost associated to the travel between two cities is their distance elevated to an arbitrary power $p\in \mathbb{R}$.
We characterize the optimal cycle and compute the average optimal cost for every number of cities when the measure used to choose the position of the cities is flat and asymptotically for large number of cities in the other cases. We also show that the optimal cost is a self-averaging quantity, and we test our analytical results with extensive simulations.
\end{abstract}

\submitto{\JPA}


\section{Introduction}\label{I}

Since many years~\cite{Kirkpatrick1981, Kirkpatrick1983, KS, Sourlas1986, FA, BSS} we know that there is a deep relation between combinatorial optimization problems and statistical mechanics of disordered systems at zero temperature.
This relation has been fully exploited in the case of random problems defined on  graphs
with weights associated to the edges of the graph, when these one are independent, identically distributed random variables~\cite{Vannimenus1984, Orland1985, Mezard1985, Mezard1986, Mezard1986a, Mezard1987, mezard1987spin, Krauth1989, mezard2009information}.
With this version of randomness there is no effect due to the geometry of the underlying space.
In the Euclidean problems, instead, the graph is embedded in a compact subspace of $\mathbb{R}^d$ and the  positions of the vertices of the graph are independent random variables while the weights associated to the edges are, generically, a function of the distance between the corresponding vertices. Therefore these weights are now correlated because of the Euclidean geometry. These correlations prevent us to obtain typical properties of the systems under analysis by exploiting the standard replica (or cavity) methods. Only in very high dimension $d$ one can hope to deal with correlations as perturbation in the solution of the random-link case~\cite{Mezard1988}.
On the opposite side, the case $d=1$ can be simple enough to allow exact solutions, at least when the cost function is a convex function of its argument, i.e. the length of the edges.
This is the case for several random Euclidean optimization problems on the complete bipartite graph: the matching problem (or 1-factor, that is dimer covering)~\cite{Caracciolo:159, Caracciolo:160, Caracciolo:169, Sicuro,Selberg2018}, the 2-factor problem (or 2-matching that is loop covering)~\cite{Caracciolo:173,Selberg2018} and finally the traveling salesman  problem (that is covering of the lattice with a unique loop. i.e. an Hamiltonian cycle)~\cite{Caracciolo:171,Selberg2018}. Interestingly enough, in all these problems, the optimal total costs are not self-averaging, and their values are simply related. The understanding of the $d=1$ solution has been useful to allow some generalization in higher dimension, at least for some particular choices of the cost function~\cite{Caracciolo:158, Caracciolo:162, Caracciolo:163, Ambrosio2016, Caracciolo:174}.

In the case $d=1$ for the complete (monopartite) graph, only the matching problem has been analyzed~\cite{Caracciolo:169}. 
In this paper we will study the traveling salesman problem (TSP) on the complete graph embedded in the unit interval $[0,1]$, with open boundary conditions and a very general form of the cost function. We will derive the average cost for the optimal solution and look for the finite-size corrections.

From the point of view of computational complexity, the matching problem is simple, because there is an algorithm which finds the optimal solution also in the worst case in a time which scales polynomially with the number of vertices. At variance, the TSP is considered the archetypal ``hard'' problem in combinatorial optimization~\cite{lawler1985}. Nevertheless it can be stated in very simple terms: given $N$ cities and $N (N-1)/2$ values that represent the cost paid for traveling between all pairs of them, the TSP  consists in finding the tour that visits all the cities and finally comes back to the starting point with the least total cost to be paid for the journey.
The Euclidean TSP, where the costs to travel from one city to another are the Euclidean distances between them, remains NP-complete~\cite{papadimitriou1977}. 
A closed formula for the average optimal cost for the TSP for the random-link model, in the monopartite case, has been obtained only using cavity method~\cite{Krauth1989} since with replica method some technical problems arise~\cite{Mezard1986}. For this reason an analytical study of the finite-size corrections in the random TSP, where replicas are a natural tool, is still lacking. This is contrary to what happens to similar problems like the matching problem, where, in the random-link approximation these issues do not arise~\cite{Mezard1985}, and one can compute easily finite-size corrections~\cite{Mezard1987, Parisi2002}, also for quite generic disorder distribution of the link variables~\cite{Caracciolo:168}.

The paper is organized as follows. In Sec.~\ref{II} we define the combinatorial optimization problem in a generic graph. We introduce the random-link and the Euclidean stochastic version of the problem. In Sec.~\ref{III} we give a complete discussion of what are the possible optimal solutions in the different instances of the problem by changing the cost function. The evaluation of the average total cost and of its variance is presented and discussed in Sec.~\ref{IV}.  Some conclusion are given in Sec.~\ref{V}. Technical details have been collected, as far as possible, in the appendices.

\section{The problem}\label{II}
Given a generic (undirected) graph $\mathcal{G} = (\mathcal{V}, \mathcal{E})$, a {\em cycle}  of length $k$ is a sequence of edges $e_1, e_2, \dots, e_k \in \mathcal{E}$ in which two subsequent edges $e_i$ and $e_{i+1}$ share a vertex for $i=1, \dots, k$; in the $i=k$ case the edge $e_{k+1}$ must be identified with the edge $e_1$. The cycle is simple if it touches every vertex only once, except for the starting and ending vertices. From now on when we will use the word cycle to mean a simple cycle. A cycle is said to be {\em Hamiltonian} when it passes through all the vertices of the graph. 
A graph that contains an Hamiltonian cycle is called an Hamiltonian graph. Determining if a graph is Hamiltonian is an NP-complete problem~\cite{lawler1985}.
The complete graph with $N$ vertices $\mathcal{K}_N$ is Hamiltonian for $N>2$.
The bipartite complete graph with $N+M$ vertices $\mathcal{K}_{N,M}$ is Hamiltonian for $M=N>1$.

We denote by $\mathcal H$ the set of Hamiltonian cycles of the graph $\mathcal{G}$. Let us assign to each edge $e \in \mathcal{E}$ of the graph $\mathcal{G}$ a positive weight $w_e > 0$. We can associate to each Hamiltonian cycle $h\in \mathcal{H}$ a total cost
\be
E(h) :=  \sum_{e\in h} w_e \, .\label{E}
\ee
In the (weighted) Hamiltonian cycle problem we search for the Hamiltonian cycle $h\in \mathcal{H}$ that minimizes the total cost in~\reff{E} i.e.
\be
E(h^*) = \min_{ h\in \mathcal{H}} E(h)  \, , \label{h^*}
\ee
where we have denoted by $h^*\in \mathcal{H}$ the optimal Hamiltonian cycle. The search of the Hamiltonian cycle $h^*$ that minimizes the total cost is also called \emph{traveling salesman problem} (TSP), since the $N$ vertices of the graph can be seen as cities and the weight for each edge can be interpreted as the cost paid to travel the route distance between them. \\
One can introduce in different ways the weights. 
For example, consider when the graph $\mathcal{K}_N$ is embedded in $\mathbb{R}^d$, that is for each $i\in [N] =\{1,2,\dots,N\}$ we associate a point $x_i\in \mathbb{R}^d$,  and for $e=(i,j)$ with $i,j \in [N]$ we introduce a cost which is a function of their Euclidean distance 
\be
w_e=|x_i-x_j|^p \, , \qquad p\in \mathbb{R} .\label{cost}
\ee
When $p=1$, we obtain the usual \emph{Euclidean} TSP.\\
In statistical physics, one is interested in introducing randomness in the problem to study the typical properties of the optimal solution, as for example the average optimal cost
\be
\overline{E}  := \overline{E(h^*)}\,.
\ee
We have denoted by a bar the average over the disorder. 
The simplest way to introduce it is to consider the weights $w_e$ independent and identically distributed random variables. In this case the problem is called {\em random} TSP and has been previously studied using the replica and the cavity method ~\cite{Vannimenus1984,Orland1985,Sourlas1986,Mezard1986,Mezard1986a,Krauth1989,Ravanbakhsh2014}. In particular, the results for the average optimal cost obtained were also proved rigorously~\cite{Wastlund}. In the random Euclidean TSP case~\cite{BHH,Steele,Karp,Percus1996,Cerf1997}, instead, one extracts randomly the positions of the points. As a consequence, this is a more challenging problem due to the presence of correlations between weights. The weights $w_e$ can be chosen to be, in general, a function of the Euclidean distance between two points and, in this work, we will use the function given in equation (\ref{cost}).

Another problem related to the TSP that we have introduced in Sec. \ref{I} is the so called \emph{2-factor} problem. It can be seen as a relaxation of the TSP since it does not have the single loop condition. In other words, the 2-factor problem corresponds to find the minimum loop-covering of the graph. In the following we will denote with $\mathcal M_2$ the set of all possible 2-factors of the graph $\mathcal{G}$.

\section{Optimal Cycles on the complete graph}~\label{III}

We shall consider the complete graph $\mathcal{K}_{N}$ with $N$ vertices, that is with vertex set $V=[N]:=\{1,\dots, N\}$. 
This graph has $\frac{(N -1)!}{2}$ Hamiltonian cycles. 
Indeed, each permutation $\pi$ in the symmetric group of $ N $ elements, $\pi \in \mathcal{S}_{N}$, defines an Hamiltonian cycle on $\mathcal{K}_{N}$. 

The sequence of points $(\pi(1), \pi(2),\dots,\pi(N),\pi(1))$ defines a closed walk with starting point $\pi(1)$, but the same walk is achieved by choosing any other vertex as starting point and also by following the walk in the opposite order, that is, $(\pi(1), \pi(N),\dots,\pi(2),\pi(1))$. As the cardinality of $\mathcal{S}_{N}$ is $N!$ we get that the number of Hamiltonian cycles in $\mathcal{K}_{N}$ is $\frac{N!}{2  N}$.

In this section, we characterize the optimal Hamiltonian cycles for different values of the parameter $p$ used in the cost function. Notice that $p=0$ and $p=1$ are degenerate cases, in which the optimal tour can be found easily.

\subsection{The $p>1$ case}

We start by proving which is the optimal cycle when $p>1$, for every realization of the disorder. To do so, we will summarize here, first, some results, already obtained in ~\cite{Caracciolo:171}, for the TSP defined on a bipartite complete graph $\mathcal{K}_{N,N}$ embedded on the interval $\Omega =[0,1]\subset \mathbb{R}$, which will be useful in the following.

Let us call $\mathcal{R} := \{r_i\}_{i=1,\dots,N}\subset\Omega$ and $\mathcal{B}:=\{ b_j\}_{j=1,\dots,N}\subset\Omega$ two sets of points, both of cardinality $N$, in the interval $\Omega$. All points are supposed to be realizations of random variables which are identically and independently distributed over $\Omega$, according to the flat distribution. 

Remember that the bipartite complete graph $\mathcal{K}_{N,N}$ has $\frac{N! \, (N-1)!}{2}$ Hamiltonian cycles.
Indeed, given two permutations $\sigma, \pi\in \mathcal{S}_n$ the sequence of points
\begin{equation}
	\begin{split}
		& h[(\sigma, \pi)] := ( r_{\sigma(1)} , b_{\pi(1)}, r_{\sigma(2)}, b_{\pi(2)}, \dots, r_{\sigma(N)} , b_{\pi(N)}, r_{\sigma(1)})
	\end{split}
\end{equation}
defines a closed walk starting from $\sigma(1)$ and therefore if we fix $\sigma(1) = 1$ this sequence or, equivalently, the sequence obtained by reversing the order defines an Hamiltonian cycle.

In ~\cite{Caracciolo:171} we proved that, if both red and blue points are ordered, i.e. $r_1 \le \dots \le r_N$ and $b_1 \le \dots \le b_N$, the optimal cycle is defined by $h^* = \tilde{h} := h[(\tilde{\sigma}, \tilde{\pi})]$ with 
\be
\tilde{\sigma}(i) := 
\begin{cases}
	2i-1 & i \leq  (N+1)/2 \\
	2N -2i +2 & i > (N+1)/2 \label{sigmatilde}
\end{cases}
\ee
and
\begin{equation}
	\tilde{\pi}(i) = \tilde{\sigma}(N+1-i)  \label{pitilde}
\end{equation}
which is exactly the reverse of $\tilde{\sigma}$. Therefore at fixed ordered set of red and blue points the optimal cost is exactly
\be
\begin{split}
	E_N(\tilde{h}) & = |r_1-b_1|^p + |r_N-b_N|^p  +\sum_{i=1}^{N-1}\left[ |b_{i+1} - r_i|^p  + |r_{i+1} - b_i|^p \right] \,,
\end{split}
\label{EN}
\ee
for every realization of the disorder.\\
Let us suppose, now, to have $N$ points $\mathcal{R} = \{r_i\}_{i=1,\dots,N}$ in the interval $[0,1]$. As usual we will assume that the points are ordered, i.e. $r_1 \le \dots \le r_N$. Let us define the following Hamiltonian cycle
\be
h^* = h[\tilde{\sigma}] = (r_{\tilde{\sigma}(1)}, r_{\tilde{\sigma}(2)}, \dots, r_{\tilde{\sigma}(N)}, r_{\tilde{\sigma}(1)})
\ee
with $\tilde{\sigma}$ defined as in~\reff{sigmatilde}. In~\ref{app:proofs} we prove that
\begin{pros}\label{pro1}
	The Hamiltonian cycle which provides the optimal cost is $h^*$.
\end{pros}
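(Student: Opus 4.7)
The plan is to reduce the monopartite statement to the already established bipartite result by proceeding in two stages.

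\emph{First stage: every optimal Hamiltonian cycle is an ``up--down'' tour.} That is, $r_1$ (the leftmost point) is the unique local minimum and $r_N$ (the rightmost point) is the unique local maximum, so the tour moves rightward through some subset $U\subseteq\{2,\dots,N-1\}$ in increasing order of index, reaches $r_N$, and then returns leftward through $D=\{2,\dots,N-1\}\setminus U$ in decreasing order of index. The key tool is the strict convexity of $x\mapsto x^p$ for $p>1$, which yields the three-way inequality
\begin{equation}
(r_b-r_a)^p+(r_d-r_c)^p\;<\;(r_c-r_a)^p+(r_d-r_b)^p\;<\;(r_d-r_a)^p+(r_c-r_b)^p
\end{equation}
for every $a<b<c<d$: among the three pairings of four ordered points by two edges, the ``disjoint'' one is cheapest, the ``crossing'' one intermediate, and the ``nested'' one most expensive. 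Whenever a cycle has a spurious local extremum, one can locate two of its edges realising a nested (or crossing) configuration and perform a 2-opt exchange to produce a strictly cheaper Hamiltonian cycle, contradicting optimality.

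\emph{Second stage: among up--down tours, the one induced by $\tilde\sigma$ is optimal.} An up--down tour is parametrised by the bipartition $(U,D)$, and its total cost splits as the sum of the $p$-th powers of consecutive gaps along the monotone outbound path through $U\cup\{r_1,r_N\}$ plus the analogous sum over $D\cup\{r_1,r_N\}$. This is exactly of the shape of the bipartite cost~(\ref{EN}) applied to the two ``halves''. Either by invoking directly the bipartite optimality of~\cite{Caracciolo:171} recalled above, or by a short exchange argument using the same strict convexity, one sees that the minimum is achieved by the interleaved partition $U=\{3,5,7,\dots\}$, $D=\{2,4,6,\dots\}$ (or its reverse): having two consecutive indices on the same leg is always strictly suboptimal because swapping one of them to the other leg reduces the squared-gap imbalance in a way controlled by the convexity inequality. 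This interleaved bipartition reproduces precisely the permutation $\tilde\sigma$ defined in~(\ref{sigmatilde}).

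\emph{Main obstacle.} The hard part is the first stage. A 2-opt move on a Hamiltonian cycle removes two non-adjacent edges and is forced to re-close the two resulting arcs by reversing one of them, so not every cost-decreasing pairing suggested by the three-way inequality corresponds to a legal 2-opt. One must therefore show that whenever the tour has a wrong extremum, there exists a particular pair of edges whose 2-opt reconnection is simultaneously cost-decreasing and geometrically admissible. This requires a case analysis according to where the extraneous extremum sits and how it is connected along the cycle, but the convexity inequality supplies the strict decrease in each case, and once Step 1 is in place the remainder of the argument is essentially combinatorial.
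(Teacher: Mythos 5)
Your two-stage plan is reasonable in outline, but the decisive step --- your Stage 1 --- is asserted rather than proved, and the tool you offer for it does not work as stated. For $p>1$ the three-way inequality indeed gives nested $>$ crossing $>$ disjoint, but a 2-opt move cannot choose freely among the three pairings: removing two vertex-disjoint edges of a Hamiltonian cycle leaves two paths, and exactly one of the two alternative reconnections keeps a single cycle. Consequently only a \emph{nested} pair is guaranteed to admit a legal, strictly cost-decreasing exchange (both alternatives are cheaper), while for a \emph{crossing} pair the unique legal reconnection may be the nested one, which \emph{increases} the cost when $p>1$; and indeed the optimal cycle $h[\tilde{\sigma}]$ itself contains many crossing pairs (e.g. $(r_1,r_3)$ and $(r_2,r_4)$), so ``locate a nested or crossing configuration and exchange'' cannot be the improving move in general. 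What your argument needs, and does not contain, is a proof that every Hamiltonian cycle with a spurious local extremum possesses a pair of edges whose unique connectivity-preserving reconnection strictly lowers the cost (for instance a nested pair); you yourself flag this case analysis as the main obstacle and leave it out, so the proposition is not established. Stage 2 is also only sketched: the bipartite theorem of \cite{Caracciolo:171} concerns tours alternating between two colour classes, not decompositions of a monopartite tour into two monotone legs, so it cannot be ``invoked directly''; the convexity/majorization exchange you allude to can be made rigorous, but it is not written.

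For comparison, the paper avoids local moves entirely by a reduction to the bipartite result: it introduces blue points $b_1=r_1$ and $b_i=r_{i-1}$ for $i\ge 2$, so that any monopartite cycle $h[\sigma]$ can be interleaved with coincident blue points into a bipartite Hamiltonian cycle $h[(\sigma,\pi_\sigma)]$ of exactly the same cost (the inserted red--blue links have zero length); the known bipartite optimum $h[(\tilde{\sigma},\tilde{\pi})]$ of \reff{sigmatilde}--\reff{pitilde} then lower-bounds $E_N(h[\sigma])$ for every $\sigma$, and the verification $\tilde{\pi}=\pi_{\tilde{\sigma}}$ shows the bound is attained by $h[\tilde{\sigma}]$. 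If you wish to salvage your route, the cleanest repair is to prove (i) any cycle containing a nested pair of edges is strictly improvable by its unique legal 2-opt, and (ii) the only Hamiltonian cycles with no nested pair are $h[\tilde{\sigma}]$ and its reversal; this would replace both of your stages, but (ii) is precisely the combinatorial lemma you would still have to supply.
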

The main ideas behind the proof is that we can introduce a complete bipartite graph in such a way that a solution of the bipartite matching problem on it is a solution of our original problem, with the same cost. Therefore, using the results known for the bipartite problem, we can prove Proposition~\ref{pro1}. \\

A graphical representation of the optimal cycle for $p>1$ and $N=6$ is given in Fig. \ref{fig1}, left panel.
\begin{figure*}[t]
	\begin{subfigure}{0.5\linewidth}
		\centering
		\includegraphics[width=0.9\columnwidth]{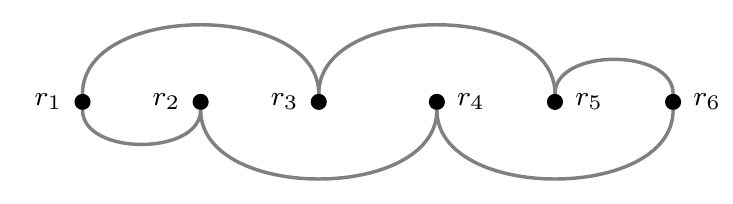}
	\end{subfigure} \hfill
	\begin{subfigure}{0.5\linewidth}
		\centering
		\includegraphics[width=0.9\columnwidth]{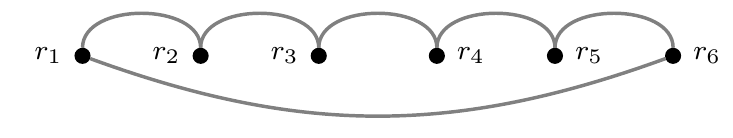}
	\end{subfigure}
	\caption{Optimal solutions for $N=6$, for the $p>1$ (left) and $0<p<1$ (right) cases. Notice how, in the $0<p<1$ case, when all the arcs are drawn in the upper half-plane above the points than there is no crossing between the arcs.}\label{fig1}
\end{figure*}
 
\subsection{The $0<p<1$ case}
We now prove that, given an ordered sequence $\mathcal{R} = \{r_i\}_{i=1,\dots,N}$ of $N$ points in the interval $[0,1]$, with $r_1 \le \dots \le r_N$, if $0 < p <1$ and if
\be
h^* = h[{\mathds{1}}] = (r_{{\mathds{1}}(1)}, r_{{\mathds{1}}(2)}, \dots, r_{{\mathds{1}}(N)}, r_{{\mathds{1}}(1)})
\ee
where $\mathds{1}$ is the identity permutation, i.e.:
\be
\mathds{1}(j) = j 
\ee
then
\begin{pros}
	\label{Proposition::0<p<1}
	The Hamiltonian cycle which provides the optimal cost is $h^*$.
\end{pros}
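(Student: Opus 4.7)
The plan is to prove Proposition 2 by a 2-opt exchange argument that exploits the strict concavity of $x \mapsto x^p$ for $0 < p < 1$. The guiding picture, illustrated in Figure~\ref{fig1}~(right), is that $h^*$ is the unique Hamiltonian cycle on $r_1 \le \dots \le r_N$ whose edges, drawn as arcs in the upper half-plane, form a non-crossing chord diagram; I will show that any cycle containing a crossing admits a strictly improving local exchange, so that the global minimum must be non-crossing.

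First I would establish the anti-Monge inequality for concave power costs: for $q_1 < q_2 < q_3 < q_4$ and $0 < p < 1$,
\[
(q_3 - q_1)^p + (q_4 - q_2)^p \;>\; (q_4 - q_1)^p + (q_3 - q_2)^p .
\]
Setting $a = q_3 - q_2$, $b = q_2 - q_1$, $c = q_4 - q_3$, this reduces to $(a+b)^p + (a+c)^p > a^p + (a+b+c)^p$, which follows from the increment $x \mapsto (x+b)^p - x^p$ being strictly decreasing in $x$ (itself a consequence of strict concavity of $x^p$ for $p < 1$), applied at $x = a < a + c$. By plain monotonicity one also has
\[
(q_3 - q_1)^p + (q_4 - q_2)^p \;>\; (q_2 - q_1)^p + (q_4 - q_3)^p ,
\]
since each summand on the right is strictly smaller than a matching summand on the left.

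Next I would call two non-adjacent edges of a Hamiltonian cycle $h$ crossing when their endpoint indices interleave, and recall that the standard 2-opt move --- reversal of the cycle segment between two chosen edges --- replaces edges $(v_i, v_{i+1})$ and $(v_j, v_{j+1})$ by $(v_i, v_j)$ and $(v_{i+1}, v_{j+1})$ and always yields a valid Hamiltonian cycle. A short case analysis on the cyclic orientation of the four endpoints shows that applying 2-opt to a crossing pair produces either the ``sequential'' or the ``nested'' uncrossing of that pair, and by the two inequalities above both configurations strictly reduce the cost. Since on ordered points in convex position the unique non-crossing Hamiltonian cycle is $h^*$ --- a standard combinatorial fact for chord diagrams, provable by a short induction on $N$ --- every $h \ne h^*$ admits a strict 2-opt improvement and hence cannot be optimal; the minimizer is therefore $h^*$. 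The main obstacle will be the case analysis in the 2-opt step: one must verify that, whichever of the possible cyclic orientations the four interleaving endpoints have along $h$, the 2-opt prescription lands in one of the two improving (non-crossing) configurations rather than producing a fresh crossing, and separately justify the uniqueness statement for non-crossing Hamiltonian cycles on points in convex position.
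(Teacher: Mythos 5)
Your proposal is correct and follows essentially the same route as the paper's own proof: a local uncrossing exchange (the segment-reversal/2-opt move, which is exactly the paper's Lemma~\ref{Lemma::0<p<1}) that preserves Hamiltonicity and strictly lowers the cost by the concavity of $x^p$ for $0<p<1$, followed by the observation that $h^*$ is the unique crossing-free cycle. The only differences are cosmetic: you prove the crossing-versus-noncrossing matching inequality directly rather than citing~\cite{Caracciolo:159}, and by arguing about the global minimizer instead of iterating the uncrossing procedure you avoid needing the paper's Lemma~\ref{Lemma::intersections} on the decrease of the total crossing number.
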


The idea behind this result is that we can define a crossing in the cycle as follows: let $\{r_i\}_{i=1,\dots,N}$ be the set of points, labeled in ordered fashion; consider two links $(r_i,r_j)$ and $(r_k,r_\ell)$ with $i<j$ and $k<\ell$; a crossing between them occurs if $i<k<j<\ell$ or $k<i<\ell<j$.
This corresponds graphically to a crossing of lines if we draw all the links as, for example, semicircles in the upper half-plane. In the following, however, we will not use semicircles in our figures to improve clarity (we still draw them in such a way that we do not introduce extra-crossings between links other than those defined above). An example of crossing is in the following figure

\begin{figure}[h]
	\centering
	\includegraphics[width=0.4\columnwidth]{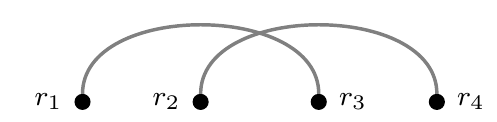}
\end{figure}
\noindent where we have not drawn the arcs which close the cycle to emphasize the crossing. Now, as shown in~\cite{Caracciolo:159}, if we are able to swap two crossing arcs with two non-crossing ones, the difference between the cost of the original cycle and the new one simply consists in the difference between a crossing matching and a non-crossing one,  that is positive when $0<p<1$. Therefore the proof of Proposition \ref{Proposition::0<p<1}, which is given in~\ref{app:proofs}, consists in showing how to remove a crossing (without breaking the cycle into multiple ones) and in proving that $h^*$ is the only Hamiltonian cycle without crossings (see Fig. \ref{fig1}, right panel).

\subsection{The $p<0$ case}\label{sec:p<0}
Here we study the properties of the solution for $p<0$. Our analysis is based, again, on the properties of the $p<0$ optimal matching solution. In~\cite{Caracciolo:169} it is shown that the optimal matching solution maximizes the total number of crossings, since the cost difference of a non-crossing and a crossing matching is always positive for $p<0$. This means that the optimal matching solution of $2N$ points on an interval is given by connecting the $i$-th point to the $(i+N)$-th one with $i = 1, \dots, N$; in this way every edge crosses the remaining $N-1$. 
Similarly to the $0<p<1$ case, suppose now to have a generic oriented Hamiltonian cycle and draw the connections between the vertices in the upper half plain (as before, eliminating all the crossings which depend on the way we draw the arcs). Suppose it is possible to identify a matching that is non-crossing, then the possible situations are the following two (we draw only the points and arcs involved in the non-crossing matching):
\begin{figure}[h]
	\centering
	\includegraphics[width=0.4\columnwidth]{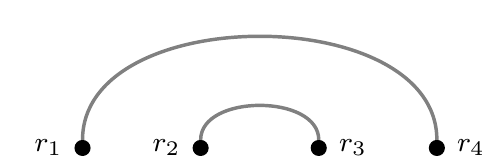} 
\end{figure}
\begin{figure}[h]
	\centering
	\includegraphics[width=0.4\columnwidth]{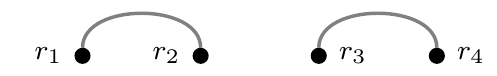}
\end{figure}

\noindent In~\ref{app:proofs}, we discuss the move that allows to replace non-crossing matchings by crossing ones, in such a way that the cycle that contains the matching remains an Hamiltonian cycle.
This move is such that the cost of the new configuration is lower than the cost of the old one, since the cost gain is the difference between the costs of a non-crossing and a crossing matching, which is always positive for $p<0$.

In this manner the proof for $p<0$ goes on the same line of $0<p<1$, but instead of finding the cycle with no crossings, now we look for the one or ones that maximize them. However, as we will see in the following, one must distinguish between the $N$ odd and even case. In fact, in the $N$ odd case, only one cycle maximizes the total number of crossings, i.e. we have only one possible solution. In the $N$ even case, on the contrary, the number of Hamiltonian cycles that maximize the total number of crossings are $\frac{N}{2}$. 

\paragraph{N odd case}

Given an ordered sequence $\mathcal{R} = \{r_i\}_{i=1,\dots,N}$ of $N$ points, with $N$ odd, in the interval $[0,1]$, with $r_1 \le \dots \le r_N$, consider the permutation $\sigma$ defined as:
\be\label{sigma}
\sigma(i) = 
\begin{cases}
	1 & \hbox{for } i = 1 \\
	\frac{N-i+3}{2} & \hbox{for } $even $ i>$1$ \\
	\frac{2N-i+3}{2} & \hbox{for } $odd $ i>$1$
\end{cases}
\ee
This permutation defines the following Hamiltonian cycle:
\be
h^*:=h[\sigma]=(r_{\sigma(1)}, r_{\sigma(2)},\dots,r_{\sigma(N)}) .\label{opt1}
\ee
\begin{pros}\label{prop<01}
	The Hamiltonian cycle which provides the optimal cost is $h^*$.
\end{pros}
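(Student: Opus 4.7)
The plan is to apply the argument outlined in the paragraphs just above the statement: for $p<0$, the swap move described in~\ref{app:proofs} strictly lowers the total cost whenever the current Hamiltonian cycle admits a pair of non-adjacent edges forming a non-crossing matching that can be replaced by the corresponding crossing matching without breaking the cycle into several loops. Consequently the optimal $h^*$ cannot admit any such strict improvement, which is equivalent to saying that \emph{every} pair of non-adjacent edges of $h^*$ already crosses. The remainder of the proof then reduces to (i) characterizing the Hamiltonian cycles with this maximum-crossing property and (ii) verifying that for $N$ odd such a characterization singles out exactly the cycle $h[\sigma]$ of~\reff{sigma}.

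For step (i) I would use an endpoint-counting argument. Fix any edge $(r_i,r_j)\in h^*$ with $i<j$, and let $a:=j-i-1$ be the number of vertex indices strictly inside $(i,j)$. Among the $N-1$ remaining edges of $h^*$, exactly $N-3$ are non-adjacent to $(r_i,r_j)$, and by the crossing-maximality each of them has exactly one endpoint in $\{r_{i+1},\dots,r_{j-1}\}$. The two edges of $h^*$ incident to $r_i$ or $r_j$ (and distinct from $(r_i,r_j)$) together contribute between $0$ and $2$ further endpoints inside. Summing over the $a$ inside vertices, each of degree $2$ in the cycle, yields
\begin{equation}
2a=(N-3)+c,\qquad c\in\{0,1,2\}.
\end{equation}
For $N$ odd, $N-3$ is even, so $c$ must be even, forcing $c\in\{0,2\}$ and hence $a\in\{(N-3)/2,(N-1)/2\}$. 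Every edge of $h^*$ therefore has linear span $j-i\in\{(N-1)/2,(N+1)/2\}$, which is exactly the statement that every edge of $h^*$ connects two vertices at cyclic distance $(N-1)/2$ on the ordered set $\{1,\dots,N\}$ regarded modulo $N$.

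For step (ii), exactly two other vertices lie at cyclic distance $(N-1)/2$ from any given vertex, so the degree-$2$ condition forces $h^*$ to contain all such edges and no others. Because $\gcd\bigl(N,(N-1)/2\bigr)=1$ whenever $N$ is odd, the resulting $2$-regular ``distance-$(N-1)/2$'' graph is a single Hamiltonian cycle, the star polygon $\{N/((N-1)/2)\}$, whose vertex sequence starting at $r_1$ reads $r_1,r_{1+(N-1)/2},r_{1+(N-1)},\dots$ with indices reduced modulo $N$ into $\{1,\dots,N\}$. A direct comparison with~\reff{sigma} identifies this sequence with $(r_{\sigma(1)},\dots,r_{\sigma(N)})$, closing the proof. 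The delicate point is the uniqueness step: it is the parity of $N-3$ that excludes $c=1$ and produces the rigid span constraint, and the coprimality $\gcd(N,(N-1)/2)=1$ that prevents the $2$-regular graph from splitting into several shorter loops. Both properties fail for $N$ even, which is exactly why the even case treated next exhibits $N/2$ distinct optima rather than a single one.
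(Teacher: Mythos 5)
Your steps (i) and (ii) are correct, and they form a nice alternative to the paper's counting: the parity relation $2a=(N-3)+c$ with $c\in\{0,2\}$ for odd $N$, together with $\gcd\bigl(N,(N-1)/2\bigr)=1$, does identify the cycle of \reff{sigma} as the unique Hamiltonian cycle in which every pair of non-adjacent edges crosses; the paper reaches the same conclusion through the per-edge bound $N-3$ obtained from the sets in \reff{Sets} and the formula \reff{max} in \ref{app:proofs}.

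The genuine gap is in your opening step, namely the claimed equivalence between ``$h^*$ admits no strictly improving swap'' and ``every pair of non-adjacent edges of $h^*$ crosses''. A given non-crossing pair of edges has a unique crossing replacement, and whether that replacement leaves a single Hamiltonian cycle depends on the relative orientation of the two edges along the cycle: as the case analysis in \ref{app:proofs} shows, for each of the two non-crossing configurations only one of the two orientations admits the swap, while the other splits the cycle into two loops. Global optimality therefore only tells you that the optimal cycle contains no \emph{swappable} non-crossing pair; it does not by itself exclude non-crossing pairs with the bad orientation. To close the argument you must show that any Hamiltonian cycle containing a non-crossing non-adjacent pair necessarily contains one with the correct orientation (equivalently, that the cycle of \reff{sigma} is the only one to which the move cannot be applied). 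This is precisely the extra work the paper invests: the orientation discussion in \ref{app:proofs}, and, in the analogous even-$N$ analysis at the end of \ref{B}, the explicit construction of a correctly oriented non-crossing pair inside any suboptimal candidate (using the shortest edge and a vertex on its long side whose two incident edges stay on that side). Without an argument of this type, your proof establishes uniqueness of the all-crossing cycle but not that the optimum must be all-crossing.
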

The proof consist in showing that the only Hamiltonian cycle with the maximum number of crossings is $h^*$. As we discuss in~\ref{app:proofs}, the maximum possible number of crossings an edge can have is $N-3$. The Hamiltonian cycle under exam has $N(N-3)/2$ crossings, i.e. every edge in $h^*$ has the maximum possible number of crossings. Indeed, the vertex $a$ is connected with the vertices $a+\frac{N-1}{2} \pmod N$ and $a+\frac{N+1}{2}  \pmod N$. The edge $(a,a+\frac{N-1}{2} \pmod N)$ has $2\frac{N-3}{2}=N-3$ crossings due to the $\frac{N-3}{2}$ vertices $a+1\pmod N, a+2\pmod N,\dots, a+\frac{N-1}{2}-1\pmod N$ that contribute with 2 edges each. This holds also for the edge $(a,a+\frac{N+1}{2} \pmod N)$ and for each $a\in[N]$. As shown in~\ref{app:proofs} there is only one cycle with this number of crossings. 

Now, notice that an Hamiltonian cycle is a particular loop covering. However, if we search for a loop covering in the $p<0$ case, we need again to find the one which maximizes the number of crossings. Since the procedure of swapping two non-crossing with two crossing arcs can be applied to each loop covering but $h^*$ it follows that:
\begin{corollary}
	$h^*$ provides also the optimal 2-factor problem solution.
\end{corollary}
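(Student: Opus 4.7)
The plan is to exploit the inclusion $\mathcal{H}\subset\mathcal{M}_2$ and show that $h^*$ remains the unique minimizer in the larger space of 2-factors. Since $h^*$ is already the optimal Hamiltonian cycle by Proposition~\ref{prop<01}, it is enough to prove that any $f\in\mathcal{M}_2$ with $f\neq h^*$ has strictly larger cost. I would do this by the same descent-via-crossings argument used for Proposition~\ref{prop<01}, suitably adapted.

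The key observation is that the swap move — replacing two non-adjacent, non-crossing edges $(r_a,r_b)$ and $(r_c,r_d)$ of $f$ by a crossing pair such as $(r_a,r_c)$ and $(r_b,r_d)$ — preserves the degree of every vertex, hence always maps a 2-factor to another 2-factor. In contrast to the Hamiltonian-cycle case, no auxiliary construction is needed to avoid splitting a single loop, because in $\mathcal{M}_2$ multiple loops are allowed: the swap is always executable. By the four-point inequality of~\cite{Caracciolo:169}, this swap strictly decreases the total cost when $p<0$. Consequently, any $f\in\mathcal{M}_2$ that contains a pair of non-adjacent, non-crossing edges is not optimal, and the optimizer must be a 2-factor in which every pair of non-adjacent edges crosses, i.e., a crossing-maximal 2-factor.

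It remains to identify these crossing-maximal 2-factors. An edge of a 2-factor shares a vertex with exactly two other edges, so it can cross at most $N-3$ of the remaining edges; summing over the $N$ edges and dividing by $2$ for double-counting gives a total of at most $N(N-3)/2$ crossings for any $f\in\mathcal{M}_2$. A crossing-maximal 2-factor thus has each edge crossing all $N-3$ of its non-adjacent partners, and by the combinatorial analysis used in~\ref{app:proofs}, this forces vertex $a$ to be joined to $a\pm(N-1)/2\pmod{N}$, which for $N$ odd is precisely $h^*$ (and forms, in that case, a single Hamiltonian cycle). The main obstacle — really the only delicate point — is checking that the uniqueness step from~\ref{app:proofs}, phrased there for Hamiltonian cycles, depends only on the edge-set property ``every edge has the maximal number $N-3$ of crossings'' and not on the a priori assumption that the edges form a single loop; since the argument is purely local and combinatorial, it transfers verbatim to the 2-factor setting and closes the proof.
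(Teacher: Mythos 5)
Your proposal is correct and follows essentially the same route as the paper: view the 2-factor problem as a relaxation ($\mathcal{H}\subset\mathcal{M}_2$), use the non-crossing$\to$crossing swap, which for $p<0$ strictly lowers the cost and for 2-factors needs no single-loop preservation analysis since it preserves all vertex degrees, and conclude with the uniqueness of the configuration in which every edge attains the maximal $N-3$ crossings, namely $h^*$. Your added remarks (the $N(N-3)/2$ counting bound and the observation that the uniqueness argument in the appendix only uses the edge-set property, not Hamiltonicity) are exactly the points the paper invokes implicitly, so no further comparison is needed.
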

An example of an Hamiltonian cycle discussed here is given in Fig. \ref{fig2}.
\begin{figure}
	\centering
	\includegraphics[width=0.5\columnwidth]{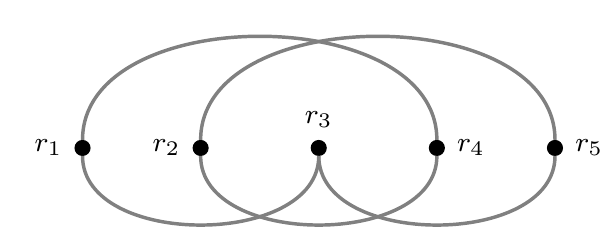}
	\caption{This is the optimal TSP and 2-factor problem solution for $N=5$, in the $p<0$ case. Notice that there are no couples of edges which do not cross and which can be changed in a crossing couple.}\label{fig2}
\end{figure}

\paragraph{N even case}

\begin{figure*}[ht]
	\begin{subfigure}[b]{0.5\linewidth}
		\centering
		\includegraphics[width=0.7\columnwidth]{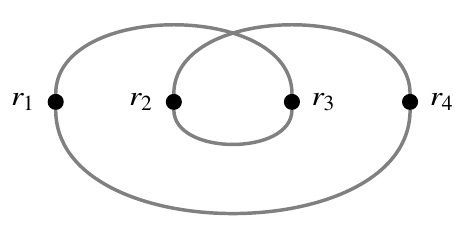}
	\end{subfigure}
	\begin{subfigure}[b]{0.5\linewidth}
		\centering
		\includegraphics[width=0.7\columnwidth]{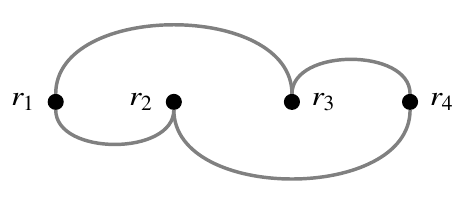}
	\end{subfigure}
	\caption{The two possible optimal Hamiltonian cycles for $p<0$, $N=4$. For each specific instance one of them has a lower cost than the other, but differently from all the other cases ($p>0$ or $N$ odd) the optimal cycle is not the same for each disorder instance. }\label{fig3}
\end{figure*}	

In this situation, differently from the above case, the solution is not the same irrespectively of the disorder instance. More specifically, there is a set of possible solutions, and at a given instance the optimal is the one among that set with a lower cost. We will show how these solutions can be found and how they are related. In this section we will use the results obtained in~\ref{B} regarding the Monopartite Euclidean 2-factor for $p<0$.\\
Given the usual sequence of points   $\mathcal{R} = \{r_i\}_{i=1,\dots,N}$ of $N$ points, with $N$ even, in the interval $[0,1]$, with $r_1 \le \dots \le r_N$, if $p <0$, consider the permutation $\sigma$ such that:
\be
\sigma(i)=\begin{cases}
	1  &\mbox{for } i =1\\
	\frac{N}{2}-i+3 &\mbox{for even } i \leq \frac{N}{2}+1  \\
	N-i+3 &\mbox{for odd } i \leq \frac{N}{2}+1\\
	i-\frac{N}{2} &\mbox{for even } i>\frac{N}{2}+1\\
	i &\mbox{for odd } i>\frac{N}{2}+1\\
\end{cases}
\ee

%
Given $\tau \in \mathcal{S}_N$ defined by $\tau(i)=i+1$ for $i \in [N-1]$ and $\tau(N)=1$, we call $\Sigma$ the set of permutations  $\sigma_k, k=1,...,N$ defined as:
\be 
\sigma_k(i)=\tau^k(\sigma(i))
\ee 
where $\tau^k=\tau \circ \tau^{k-1}$.
Thus we have the following result:
\begin{pros} \label{TSP_p<0}
The optimal Hamiltonian cycle is one of the cycles defined as
\be
h_k^*:=h[\sigma_k]=(r_{\sigma_k(1)}, r_{\sigma_k(2)},\dots,r_{\sigma_k(N)}) .\label{opt2}
\ee
\end{pros}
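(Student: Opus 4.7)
The plan is to parallel the argument given for the $N$ odd case, while accounting for the parity obstruction that prevents a unique crossing-maximal cycle from existing when $N$ is even. As in the odd case, the proof proceeds in two stages: first a reduction to maximizing the total number of crossings, then a combinatorial classification of the maximizers.

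For the reduction I would invoke the swap argument from Section~\ref{sec:p<0} verbatim: if an Hamiltonian cycle $h$ contains a pair of non-crossing edges that can be rewired to a crossing pair without breaking $h$ into smaller loops, then the resulting cycle $h'$ has strictly lower cost for $p<0$, because the cost difference reduces to the difference between a non-crossing and a crossing matching on four points, which is positive in this regime. Hence any optimal Hamiltonian cycle must be ``crossing-saturated,'' i.e.\ admit no such swap.

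Next, I would check by direct computation that each $h_k^*$ is crossing-saturated. The permutation $\sigma$ connects vertex $1$ to vertices $\tfrac{N}{2}+1$ and $\tfrac{N}{2}+2$, and more generally pairs each $a$ with the two indices that split the remaining vertices as evenly as possible; this is the best one can do when $N$ is even, since no vertex admits two strictly antipodal neighbours. Applying $\tau^k$ simply relabels cyclically, preserving both the adjacency pattern and the total number of crossings, so all $\sigma_k$ give cycles with the same (maximal) crossing count. Up to the rotation and reversal symmetries of an Hamiltonian cycle, the $N$ permutations $\sigma_k$ collapse to the $N/2$ distinct extremal cycles announced in the text.

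The step I expect to be the most delicate is the converse: showing that every crossing-saturated Hamiltonian cycle must be some $h_k^*$. Here I would invoke the 2-factor classification for $p<0$ on $N$ even points obtained in~\ref{B}, which pins down the admissible ``antipodal'' adjacency structure at each vertex. Any Hamiltonian cycle maximising crossings is, \emph{a fortiori}, a 2-factor maximising crossings; conversely, one has to argue that precisely the $h_k^*$ among these optimal 2-factors happen to form a single loop rather than decomposing into shorter ones. Carrying out this final combinatorial check — identifying which of the admissible 2-factors are single-loop — is where the even parity of $N$ really plays a role and where the main combinatorial work lies; once it is done, combining it with the reduction step closes the proof.
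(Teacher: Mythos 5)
There is a genuine gap in your converse step. You argue: ``any Hamiltonian cycle maximising crossings is, \emph{a fortiori}, a 2-factor maximising crossings,'' and then propose to finish by identifying which of the crossing-maximal 2-factors classified in~\ref{B} are single loops. This inclusion is false, and the plan built on it is vacuous: for even $N$ the 2-factors that maximise the number of crossings are \emph{never} Hamiltonian cycles --- they decompose into 4-vertex loops (plus one 6-vertex loop when $4\nmid N$), which is exactly why the even case is delicate. A Hamiltonian cycle, being connected, necessarily has strictly fewer crossings than the optimal 2-factor, so ``selecting the single-loop members of the optimal 2-factor family'' selects nothing, and no classification of crossing-saturated Hamiltonian cycles follows from Propositions~\ref{b1} and~\ref{b2} alone.

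The missing work, which is the heart of the paper's proof of Proposition~\ref{TSP_p<0}, is quantitative: starting from the optimal 2-factor one must merge its loops into a single cycle, each merge trades a crossing pair of edges for a non-crossing pair and hence \emph{loses} crossings, and one has to enumerate the possible merge moves by the lengths of the edges involved (losses of $2$, $3$, $4$, $6$ or $7$ crossings) to show that the cheapest move replaces two adjacent edges of length $\frac{N}{2}-1$ by two of length $\frac{N}{2}-2$, losing only $2$ crossings. Iterating this minimal-loss move $\frac{N}{4}-1$ (or $\frac{N-6}{4}$) times produces exactly the cycles $h_k^*$, with the $\frac{N}{2}$-fold multiplicity coming from the choice of the surviving pair of adjacent length-$\left(\frac{N}{2}-1\right)$ edges. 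One must then still rule out every other Hamiltonian cycle directly: either all its edges have length $\frac{N}{2}-2$, in which case it differs from some $h_k^*$ by a matching that is crossing in $h_k^*$ and non-crossing in it (so it costs more), or it contains an edge of length at most $\frac{N}{2}-3$, in which case a counting argument on the two vertex sets separated by that edge produces a non-crossing pair with the right orientation, which can be swapped into a crossing pair without disconnecting the cycle, lowering the cost. Your local notion of ``crossing-saturated'' does not by itself deliver this classification; without the merge-cost bookkeeping and the edge-length dichotomy your argument does not close. (Incidentally, vertex $1$ in $\sigma$ is joined to $r_{N/2}$ and $r_{N/2+1}$, not to $r_{N/2+1}$ and $r_{N/2+2}$, though this is a minor slip compared with the structural issue above.)
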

An example with $N=4$ points is shown in Fig. \ref{fig3}.
In~\ref{B} the optimal solution for the Euclidean 2-factor in obtained. In particular, we show how the solution is composed of a loop-covering of the graph. The idea for the proof of the TSP is to show how to join the loops in the optimal way in order to obtain the optimal TSP. The complete proof of the Proposition \ref{TSP_p<0} is given in~\ref{B}.

Before ending this section, we want to extend to more general cost functions some of our results.
As it is shown in~\cite{Caracciolo:169}, the cost function in~\reff{cost}, when $p<0$, has the same matching properties of a class of functions that were called $C$-functions:
\begin{defn}
	We will say that a function $f: [0,1]\rightarrow \mathbb{R}$ is a $C$-function if, given $0<z_1<z_2<1$, for any $\eta \in (0, 1-z_2),  \mu \in (z_2,1)$
	\be
	f(z_2)-f(z_1)\leq f(\eta+z_2)-f(\eta+z_1) \label{17}
	\ee
	\be
	f(z_2)-f(z_1)\leq f(\mu-z_2)-f(\mu-z_1) \, . \label{18}
	\ee
\end{defn}   

Eq.~\reff{17} implies that $\Psi_\eta(z) := f (\eta + z)-f (z)$ is an increasing function in the interval $(0,1-\eta)$ for any value of $\eta \in (0,1)$. Moreover, if $f$ is continuous, Eq.~\reff{17} is equivalent to convexity.
Eq.~\reff{18} implies that $\Phi_\eta(z) := f (\eta- z) - f (z)$ is increasing in the interval $(0,\eta)$ for any value of of $\eta \in (0,1)$. If $f$ is differentiable, this fact can be written as
\be
f'(\eta- z)+f'(z)\leq 0 , \quad z\in(0,\eta),  \quad \eta\in (0,1) , 
\ee
which for $\eta\to 1$ becomes
\be
f'(1-z) + f'(z) \leq 0 \, . 
\ee 
This implies, for example, that the convex function
\be
                            f_\alpha(x) :=(x-\alpha)^2, \quad \alpha \in \mathbb{R} \, , 
\ee
is a $C$-function on the unit interval for $\alpha \geq 1/2$ only.

In case the cost is a $C$-function, the expression for the weight in~\reff{E} becomes:
\be
w_e=f(|x_i-x_j|)
\ee
where $f$ is a $C$-function.

Since both in the $N$ odd case and in the $N$ even case the results obtained only depend on the properties of the matching for $p<0$, and since the weight $w_e$ in ~\reff{cost} is a $C$-function for $p<0$, our results for this case are valid in general for a $C$-function cost. It holds, then:
\begin{pros}
	Given a cost function of the form:
	\be
	E(h)=\sum_{e\in h} w_e
	\ee
	where
	\be
		w_e=f(|x_i-x_j|)
	\ee
	with $f$ a $C$-function, the optimal cycle is given by ~\reff{opt1} when N is odd and by~\reff{opt2} when N is even.
\end{pros}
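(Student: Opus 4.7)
The plan is to reduce the statement to the single analytic input driving the proofs of Propositions~\ref{prop<01} and~\ref{TSP_p<0}, and then to verify that input from the $C$-function axioms. Inspecting those proofs (as carried out in Appendices~\ref{app:proofs} and~\ref{B}), one sees that the cost function enters only through a ``swap'' step: given four ordered points $r_i<r_j<r_k<r_\ell$ that appear in some loop covering, replacing a non-crossing pair of edges $\{(r_i,r_j),(r_k,r_\ell)\}$ (or a nested pair $\{(r_i,r_\ell),(r_j,r_k)\}$) by the crossing pair $\{(r_i,r_k),(r_j,r_\ell)\}$ does not increase the total cost. Every remaining step---bounding the maximum number of crossings a single edge can participate in, identifying the unique (for $N$ odd) or $N/2$-member family (for $N$ even) of maximal-crossing Hamiltonian cycles, and reassembling loops of a $2$-factor into a single Hamiltonian cycle without losing crossings---is purely combinatorial and survives unchanged once the swap step is granted.

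Next, I would derive the two swap inequalities directly from the definition of a $C$-function. For the non-crossing-to-crossing swap, apply property~\reff{18} with $\mu:=r_\ell-r_i$, $z_1:=r_j-r_i$, $z_2:=r_k-r_i$; the hypotheses $0<z_1<z_2<\mu<1$ are immediate and the conclusion reads $f(r_k-r_i)-f(r_j-r_i)\le f(r_\ell-r_k)-f(r_\ell-r_j)$, i.e.\ $f(r_k-r_i)+f(r_\ell-r_j)\le f(r_j-r_i)+f(r_\ell-r_k)$. For the nested-to-crossing swap, apply property~\reff{17} with $\eta:=r_j-r_i$, $z_1:=r_k-r_j$, $z_2:=r_\ell-r_j$; the hypotheses are again immediate and the conclusion rearranges to $f(r_k-r_i)+f(r_\ell-r_j)\le f(r_k-r_j)+f(r_\ell-r_i)$. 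Thus the two clauses of the $C$-function definition furnish exactly the two kinds of swap inequality needed.

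With both inequalities in hand, I would then invoke the proofs of Propositions~\ref{prop<01} and~\ref{TSP_p<0} with $|x|^p$ replaced throughout by $f(|x|)$. Those proofs argue only via the fact that swapping non-crossings into crossings decreases the cost, combined with combinatorial characterizations of maximal-crossing Hamiltonian cycles that are independent of the specific positions of the points. Hence the conclusions carry over verbatim: $h^*$ from~\reff{opt1} is optimal for $N$ odd, and one of the $h^*_k$ from~\reff{opt2} is optimal for $N$ even.

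I do not expect a genuine obstacle. The class of $C$-functions was introduced in~\cite{Caracciolo:169} precisely to axiomatize the matching-swap inequalities satisfied by $|x|^p$ for $p<0$, and the content of the present statement is that these inequalities are the only analytic ingredients used in the $p<0$ TSP analysis. The only real care required is a re-examination of the appendix proofs under the weaker hypothesis, to confirm that monotonicity of $f$ is never invoked---only the signed cost differences appearing in the swap inequalities above.
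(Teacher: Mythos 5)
Your proposal is correct and follows essentially the same route as the paper, which likewise observes that the $p<0$ arguments for Propositions~\ref{prop<01} and~\ref{TSP_p<0} use the cost only through the crossing-versus-non-crossing matching swap, a property shared by all $C$-functions. Your only addition is to derive the two swap inequalities explicitly from~\reff{17} and~\reff{18} instead of citing the matching results of~\cite{Caracciolo:169}, and those derivations are correct.
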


\section{Analytical and numerical results}~\label{IV}
In this section we will use the insight on the solution just obtained to compute typical properties of the optimal cost for various values of $p$.

\subsection{Evaluation of the average costs}
\begin{figure}
	\centering
	\includegraphics[width=0.8\columnwidth]{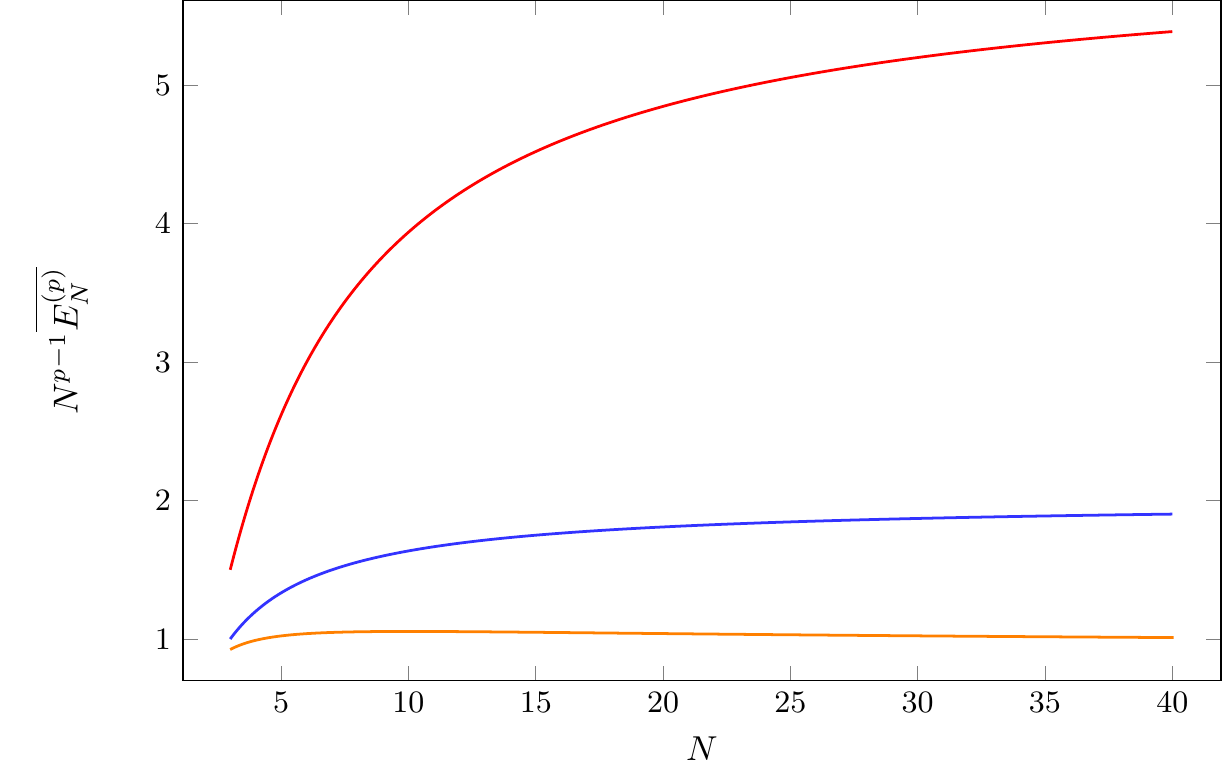} 
	\caption{Rescaled average optimal cost for $p=2$, 1, 0.5 (from top to bottom).} \label{fig::TSP1}
\end{figure}
Given $N$ random points chosen with flat distribution in the interval $[0,1]$ and we order them in increasing position, the probability for finding the $l$-th point in $[x,x+dx]$ is given by
\begin{equation}
p_l(x) dx = \frac{\Gamma(N+1)}{\Gamma(l) \, \Gamma(N-l+1)} \, x^{l-1} (1-x)^{N-l} dx
\end{equation}
while the probability for finding the $l$-th point in $[x,x+dx]$ and the $s$-th point in $[y,y+dy]$ is given, for $s>l$  by
\begin{equation}
\begin{split}
p_{l,s}(x, y) dx dy  & = \frac{\Gamma(N+1)}{\Gamma(l) \, \Gamma(s-l) \, \Gamma(N-s+1)} \, x^{l-1} \\
& \times (y-x)^{s-l-1} (1-y)^{N-s}\, \theta(y-x) dx dy \,,
\end{split}
\end{equation}
see for example~\cite[App.~A]{Caracciolo:160}. It follows that
\be
\int \,dx\, dy\, (y-x)^\alpha\, p_{l, \, l+k}(x, y) = \frac{\Gamma(N+1)\, \Gamma(k+\alpha)}{\Gamma(N+\alpha+1)\, \Gamma(k)   } 
\label{k}
\ee
independently from $l$, and, therefore, in the case $p>1$ we obtain
\be
\overline{E_N[h^*]} = \left[ (N-2) (p+1) +2 \right] \, \frac{\Gamma(N+1)\, \Gamma(p+1)}{\Gamma(N+p+1)   } 
\label{AOC}
\ee
and in particular for $p=2$ 
\be
\overline{E_N[h^*]} = \frac{ 2\, (3 N - 4)}{(N+1) (N+2)}\, ,
\ee
and for $p=1$ we get
\be
\overline{E_N[h^*]} = \frac{ 2\, (N - 1)}{N+1}\,.
\label{p1}
\ee
In the same way one can evaluate the average optimal cost when $0<p<1$, obtaining
\be
\begin{split}
\overline{E_N[h^*]} = \frac{\Gamma(N+1)}{\Gamma(N+p+1)} \left[(N-1) \, \Gamma(p+1) + \frac{\Gamma(N + p - 1)}{\Gamma(N-1)} \right]
\end{split}
\ee
which coincides at $p=1$ with~\reff{p1} and, at $p=0$, provides  $\overline{E_N[h^*]}  = N$. For large $N$, we get
\be
\lim_{N\to \infty} N^{p-1} \overline{E_N[h^*]}  = 
\begin{cases}
\Gamma(p+2) & \hbox{for \,}  p\ge 1 \\
\Gamma(p+1) & \hbox{for \,}  0< p<1\, .
\end{cases}
\ee
The asymptotic cost for large $N$ and $p>1$ is $2(p+1)$ times the average optimal cost of the matching problem on the complete graph $\mathcal{K}_N$ which has been computed in~\cite{Caracciolo:169}; notice that in~\cite{Caracciolo:169} the cost is normalized with $N$ and the number of points is $2N$, differently from what we do here. This factor $2(p+1)$ is another difference with respect to the bipartite case, where the cost of the TSP is twice the cost of the assignment problem~\cite{Caracciolo:171}, independently of $p$.
We report in~\ref{app::C} the computation of the average costs when the points are extracted in the interval $\left[0,1\right]$ using a general probability distribution.

For $p<0$ and $N$ odd we have only one possible solution, so that the average optimal cost is
\begin{equation}
\begin{split}
\overline{E_N[h^*]} = \frac{\Gamma(N+1)}{2\Gamma(N+p+1)} \left[(N-1) \frac{\Gamma\left(\frac{N+1}{2}+p\right)}{\Gamma \left( \frac{N+1}{2} \right)} + (N+1) \frac{\Gamma\left(\frac{N-1}{2}+p\right)}{\Gamma \left( \frac{N-1}{2} \right)}\right] \,.
\end{split}
\end{equation}
For large $N$ it behaves as
\begin{equation}
\lim_{N\to \infty}  \frac{\overline{E_N[h^*]}}{N} = \frac{1}{2^p} \,,
\end{equation}
which coincides with the scaling derived before for $p=0$. Note that for large $N$ the average optimal cost of the TSP problem is two times the one of the corresponding matching problem for $p<0$~\cite{Caracciolo:169}. 

\begin{figure}
	\centering
	\includegraphics[width=0.8\columnwidth]{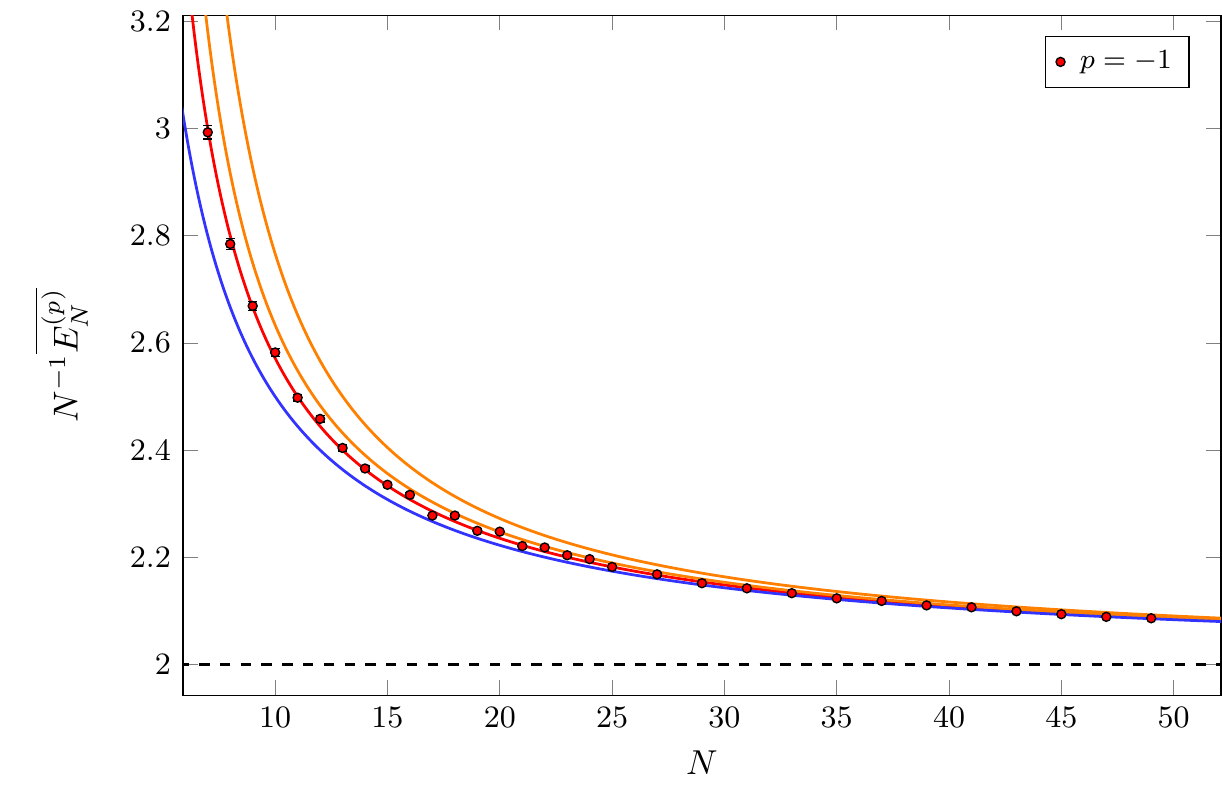} 
	\caption{Rescaled average optimal cost in the $p=-1$ case. The red points and line are respectively the result of a numerical simulation and the theoretical prediction in the odd $N$ case. The blue line is the 2 times the theoretic value of the optimal matching. The orange lines (from top to bottom) are the average costs $\overline{E_N[h_1]}$ and $\overline{E_N[h_2]}$ defined in equation (\ref{h_1}) and (\ref{h_2}) respectively. The dashed black line is the large $N$ limit of all the curves.} \label{fig::TSP2}
\end{figure}

For $N$ even, instead, there are $N/2$ possible solutions. One can see $N/2-1$ of these share the same average energy, since they have the same number of links with the same $k$ of equation (\ref{k}). These solutions have 2 links with $k=N/2$, $N/2$ links with $k=N/2+1$ and $N/2-2$ links with $k=N/2+1$. We denote this set of configurations with $h_1$ (although they are many different configurations, we use only the label $h_1$ to stress that all of them share the same average optimal cost) and its average cost is
\begin{equation}
\begin{split}
\overline{E_N[h_1]} =& \frac{\Gamma(N+1)}{\Gamma(N+p+1)} \left[ \frac{N}{2} \frac{\Gamma\left( \frac{N}{2} + p-1 \right)}{\Gamma \left( \frac{N}{2}-1 \right)}  \right.\\
& \left. + \left( \frac{N}{2}-2 \right) \frac{\Gamma \left( \frac{N}{2} + p + 1 \right)}{\Gamma \left( \frac{N}{2} +1 \right)} + 2 \frac{\Gamma\left( \frac{N}{2}+p \right)}{\Gamma \left( \frac{N}{2} \right)}\right].
\end{split}
\label{h_1}
\end{equation}
The other possible solution, that we denote with $h_2$ has 2 links with $k=N/2-1$, $N/2$ links with $k=N/2+1$ and $N/2-1$ links with $k=N/2+1$ and its average cost is
\begin{equation}
\begin{split}
\overline{E_N[h_2]} = & \frac{\Gamma(N+1)}{\Gamma(N+p+1)} \left[ \left(\frac{N}{2}-1\right) \frac{\Gamma\left( \frac{N}{2} + p-1 \right)}{\Gamma \left( \frac{N}{2}-1 \right)} \right. \\
& \left. + \left( \frac{N}{2}-1 \right) \frac{\Gamma \left( \frac{N}{2} + p + 1 \right)}{\Gamma \left( \frac{N}{2} +1 \right)} + 2 \frac{\Gamma\left( \frac{N}{2}+p \right)}{\Gamma \left( \frac{N}{2} \right)}\right].
\end{split}
\label{h_2}
\end{equation}
In Fig.~\ref{fig::TSP1} we plot the analytical results for $p=0.5$, $1$, $2$ and in Fig.~\ref{fig::TSP2} we compare analytical and numerical results for $p=-1$. In particular, since $\overline{E_N[h_1]} > \overline{E_N[h_2]}$, $\overline{E_N[h_2]}$ provides our best upper bound for the average optimal cost of the $p=-1$, $N$ even case. The numerical results have been obtained by solving $10^4$ TSP instances 
using its linear programming representation. In particular, for a given instance of the disorder, we first solve the corresponding 2-factor problem; then if the optimal 2-factor solution contains more than one loop, we add constraints in order to eliminate them; this is repeated until a solution with only one loop is achieved.

\subsection{Self-averaging property for $p>1$}
An interesting question is whether the average optimal cost is a self-averaging quantity. Previous investigation regarding the matching problem~\cite{Steele1997,Yukich2006} showed that indeed the average optimal cost is self-averaging in every dimensions when the graph is the complete one, whereas it is not in the bipartite case in dimension 1. In addition, in~\cite{Caracciolo:171} it was shown that also the average optimal cost of the one-dimensional random Euclidean TSP on the complete bipartite graph is not self-averaging. This difference between the bipartite and the complete case can be understood in terms of local fluctuations of blue and red point density, that make arguments as those given in~\cite{Houdayer1998} fail.
This is the case, at least in the one dimensional case, also for the random Euclidean TSP. Again we collect in~\ref{app::2Moment} all the technical details concerning the evaluation of the second moment of the optimal cost distribution $\overline{E_N^2}$. Here we only state the main results. $\overline{E_N^2}$ has been computed for all number of points $N$ and, for simplicity, in the case $p>1$ and it is given in equation~\eqref{SecondMoment}.
In the large $N$ limit it goes like
\be
\lim_{N\to \infty} N^{2(p-1)} \overline{E_N^2[h^*]} = \Gamma^2(p+2)
\ee
i.e. tends to the square of the rescaled average optimal cost. This proves that the cost is a self-averaging quantity. Using \eqref{SecondMoment} together with equation \eqref{AOC} one gets the variance of the optimal cost. In particular for $p=2$ we get
\be
\sigma_{E_N}^2=\frac{4 (N (5 N (N+13)+66)-288)}{(N+1)^2 (N+2)^2 (N+3) (N+4)} \,,
\label{Variance}
\ee
which goes to zero as $\sigma_{E_N}^2 \simeq 20/N^3$. 



\section{Conclusions}\label{V}
The TSP is one of the most important and studied problem in combinatorial optimization. In the present paper we have analyzed a specific version of the TSP from the point of view of statistical mechanics, which is focused on typical properties of the solution. We have considered the problem in one dimension, with a cost function which involves Euclidean distances between points to the power $p$. We have characterized the optimal cycles for every value of $p$. For $p>1$ the optimal cycles follows directly from the bipartite case~\cite{Caracciolo:171}; for $0<p<1$ the solution is the one that minimizes the number of crossings. For $p<0$, instead, the optimal cycles are the ones that maximize the number of crossings; it turns out that when the number of points $N$ to be visited is odd there is only one cycle maximizing the number of crossings; as a consequence the same cycle is the solution of the 2-factor problem. However, when $N$ is even, there are $N/2$ cycles (connected by a cyclic permutation) that have the maximum number of intersections, so that the solution depends on the particular instance of the disorder, i.e. on the position of the points. In the last part of the work we have computed the average optimal cost and we have compared theoretical prediction with numerical results. Finally we have proved that, as happens for the matching problem, the average optimal cost is a self-averaging quantity, by explicitly computing its variance.
With this work we can add the TSP to those combinatorial optimization problems in a low dimension number for which we can study typical properties, despite the absence of a general method such as the replica method, which is spoiled by the Euclidean correlations.

\appendix

\section{Proofs}\label{app:proofs}
In this appendix we prove various propositions stated in the main text.
\subsection{Proof of Proposition \ref{pro1}} \label{App::p>1}
Consider a $\sigma \in \mathcal{S}_N$ with $\sigma(1) =1$. As we said before, taking $\sigma(1) =1$ correspond to the irrelevant choice of the starting point of the cycle.
Let us introduce now a new set of ordered points $\mathcal{B}:=\{ b_j\}_{j=1,\dots,N}\subset [0,1]$
such that
\be
b_i = 
\begin{cases}
	r_1 & \hbox{for \, } i =1\\
	r_{i-1} & \hbox{otherwise } 
\end{cases}
\ee
and consider the Hamiltonian cycle on the complete bipartite graph with vertex sets $\mathcal{R}$ and $\mathcal{B}$
\be
\begin{split}
	& h[(\sigma, \pi_\sigma)] := (r_1, b_{\pi_\sigma(1)}, r_{\sigma(2)}, b_{\pi_\sigma(2)},\dots, r_{\sigma(N)}, b_{\pi_\sigma(N)}, r_{\sigma(1)})
\end{split}
\ee
so that 
\be\label{pi_s}
\pi_\sigma(i) = 
\begin{cases}
	2 & \hbox{for \, } i =1\\
	\sigma(i)+1 & \hbox{for \, } i < k\\
	\sigma(i+1)+1 & \hbox{for \, } i \geq k\\
	1 & \hbox{for \, } i = N\\
\end{cases}
\ee
where $k$ is such that $\sigma(k) = N$.
We have therefore
\be\label{b}
\begin{split}
	& (b_{\pi_\sigma(1)}, b_{\pi_\sigma(2)}, \dots, b_{\pi_\sigma(k-1)}, b_{\pi_\sigma(k)}, \dots, b_{\pi_\sigma(N-1)}, b_{\pi_\sigma(N)}) \\
	& = (r_1, r_{\sigma(2)}, \dots, r_{\sigma(k-1)},r_{\sigma(k+1)} ,\dots , r_{\sigma(N)},r_1).
\end{split}
\ee
In other words we are introducing a set of blue points such that we can find a bipartite Hamiltonian tour which only use link available in our monopartite problem and has the same cost of $\sigma$. Therefore, by construction (using \eqref{b}):
\be
\begin{split}
	E_N(h[\sigma]) & = E_N(h[(\sigma, \pi_\sigma)] ) \geq E_N(h[(\tilde{\sigma}, \tilde{\pi})]) \\
	& = E_N(h[(\tilde{\sigma}, \pi_{\tilde{\sigma}})]) = E_N(h[\tilde{\sigma}]),
\end{split}
\ee
where the fact that $\tilde{\pi}=\pi_{\tilde{\sigma}}$ can be checked using \eqref{sigmatilde} and \eqref{pitilde} and \eqref{pi_s}.

\subsection{Proof of Proposition \ref{Proposition::0<p<1}} \label{App::0<p<1}
Before proving Proposition~\ref{Proposition::0<p<1}, we enunciate and demonstrate two lemmas that will be useful for the proof. The first one will help us in understand how to remove two crossing arcs without breaking the TSP cycle into multiple ones. The second one, instead will prove that removing a crossing between two arcs will always lower the total number of crossing in the TSP cycle.
\begin{lemma}
	\label{Lemma::0<p<1}
	Given an Hamiltonian cycle with its edges drawn as arcs in the upper half-plane, let us consider two of the arcs that cannot be drawn without crossing each other. Then, this crossing can be removed only in one way without splitting the original cycle into two disjoint cycles; moreover, this new configuration has a lower cost than the original one.   
\end{lemma}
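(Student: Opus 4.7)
The plan is to treat this as the standard 2-opt argument, combined with the matching inequality already established in~\cite{Caracciolo:159}. Fix the two crossing arcs as $(r_i, r_j)$ and $(r_k, r_\ell)$ with $i < k < j < \ell$. The four endpoints $\{r_i, r_j, r_k, r_\ell\}$ can be paired into two edges in three distinct ways; one of them is the original crossing configuration, and the other two are the \emph{parallel} pairing $\{(r_i, r_k), (r_j, r_\ell)\}$ and the \emph{nested} pairing $\{(r_i, r_\ell), (r_j, r_k)\}$. By the ordering $i<k<j<\ell$ both of these alternatives are non-crossing, so it suffices to prove that exactly one of them preserves the Hamiltonian cycle, and that both strictly lower the total cost when $0<p<1$.

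For the topological uniqueness, I would delete the two arcs $(r_i, r_j)$ and $(r_k, r_\ell)$ from the cycle. The remainder is a pair of vertex-disjoint simple paths whose combined endpoint set is exactly $\{r_i, r_j, r_k, r_\ell\}$. Depending on the traversal order of the original cycle, these two paths pair the four endpoints in exactly one of two mutually exclusive ways: either one path has endpoints $r_i, r_k$ and the other has endpoints $r_j, r_\ell$, or one path has endpoints $r_i, r_\ell$ and the other has endpoints $r_j, r_k$. In the first case, inserting the parallel pairing would close each path into its own loop, producing two disjoint cycles, while the nested pairing glues the two paths at both ends and yields a single Hamiltonian cycle; in the second case the roles of parallel and nested are interchanged. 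In either situation exactly one of the two non-crossing pairings is cycle-preserving, which gives the claimed uniqueness.

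Finally, for the cost, I would quote directly the matching result of~\cite{Caracciolo:159}: for $0<p<1$ the sum $|r_i-r_j|^p+|r_k-r_\ell|^p$ of the lengths of a crossing pair of arcs on four ordered points is strictly larger than the sum of lengths of either of the two non-crossing pairings on the same endpoints. Applied to the unique cycle-preserving uncrossing identified above, this yields the strict reduction of $E_N$, completing the proof. The only real obstacle is the bookkeeping in the topological step; once the two residual paths are named and their endpoint pairings enumerated, the rest is a direct application of an inequality already available in the literature.
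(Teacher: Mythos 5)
Your proof is correct and takes essentially the same route as the paper: a 2-opt uncrossing in which, after deleting the two crossing arcs, exactly one of the two non-crossing re-pairings of the four endpoints rejoins the two residual paths into a single Hamiltonian cycle (the other closes each path into its own loop), with the cost decrease supplied by the crossing versus non-crossing matching inequality of~\cite{Caracciolo:159} for $0<p<1$. Your path-endpoint bookkeeping is just a cleaner packaging of the paper's explicit case analysis on the two oriented configurations $(C_1r_1r_3C_2r_2r_4C_3)$ and $(C_1r_1r_3C_2r_4r_2C_3)$.
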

\begin{proof}
	Let us consider a generic oriented Hamiltonian cycle and let us suppose it contains a matching as in figure:
	\begin{figure}[h!]
		\centering
		\includegraphics[width=0.4\columnwidth]{CrossingsProof.pdf}
	\end{figure}

	\noindent There are two possible orientations for the matching that correspond to this two oriented Hamiltonian cycles: 
	
	\begin{enumerate}
		\item $(C_1r_1r_3C_2r_2r_4C_3)\,,$
		\item $(C_1r_1r_3C_2r_4r_2C_3)\,,$
	\end{enumerate}
	where $C_1$, $C_2$ and $C_3$ are paths (possibly visiting other points of our set).
	The other possibilities are the dual of this two, and thus they are equivalent. In both cases, a priori, there are two choices to replace this crossing matching $(r_1, r_3)$, $(r_2,r_4)$ with a non-crossing one: $(r_1, r_2)$, $(r_3, r_4)$ or $(r_1, r_4)$, $(r_2, r_3)$. We now show, for the two possible prototypes of Hamiltonian cycles, which is the right choice for the non-crossing matching, giving a general rule. Let us consider case 1: here, if we replace the crossing matching with $(r_1, r_4)$, $(r_2, r_3)$, the cycle will split; in fact we would have two cycles:  $(C_1r_1r_4C_3)$ and $(r_3C_2r_2)$. Instead, if we use the other non-crossing matching, we would have: $(C_1r_1r_2[C_2]^{-1}r_3r_4C_3)$. This way we have removed the crossing without splitting the cycle. Let us consider now case 2: in this situation, using $(r_1, r_4)$, $(r_2, r_3)$ as the new matching, we would have: $(C_1r_1r_4[C_2]^{-1}r_3r_2C_3)$; the other matching, on the contrary, gives: $(C_1r_1r_2C_3)$ and $(r_3C_2r_4)$.
	
	The general rule is the following: given the oriented matching, consider the four oriented lines going inward and outward the node. Then, the right choice for the non-crossing matching is obtained joining the two couples of lines with opposite orientation.
	
	Since the difference between the cost of the original cycle and the new one simply consists in the difference between a crossing matching and a non-crossing one,  this is positive when $0<p<1$, as shown in~\cite{Caracciolo:159}.
\end{proof}

Now we deal with the second point: given an Hamiltonian cycle, in general it is not obvious that replacing non-crossing arcs with a crossing one, the total number of intersections increases. Indeed there could be the chance that one or more crossings are removed in the operation of substituting the matching we are interested in. 
Notice that two arcs forms a matching of 4 points. Therefore, from now on, we will use expressions like ``crossing matching'' (``non-crossing matching'') and ``two crossing arcs'' (``two non-crossing arcs'') indifferently.
We now show that it holds the following

\begin{lemma}
	\label{Lemma::intersections}
	Given an Hamiltonian cycle with a matching that is non-crossing, if it is replaced by a crossing one, the total number of intersections always increases. Vice versa, if a crossing matching is replaced by a non-crossing one, the total number of crossings always decreases.  
\end{lemma}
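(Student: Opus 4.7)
The plan is to prove the non-crossing-to-crossing direction (the reverse follows by applying the same argument to the reverse swap). Label the four matching points $p_{i_1} < p_{i_2} < p_{i_3} < p_{i_4}$, and write NC and C for the initial non-crossing and final crossing matchings. The swap modifies only these two edges, leaving every other edge of the Hamiltonian cycle intact; consequently the crossings between pairs of non-matching edges are unchanged, while the internal crossing count of the matching itself jumps from $0$ to $1$, a gain of $+1$. It then suffices to show that, for each external edge $E$ of the cycle, the number of crossings of $E$ with the two matching edges does not decrease under the swap.

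To organise this edge-by-edge comparison, I would associate to each $E = (p_u, p_v)$ the indicator vector $(\chi_1, \chi_2, \chi_3, \chi_4) \in \{0,1\}^4$, where $\chi_k = 1$ iff $p_{i_k}$ lies strictly between $p_u$ and $p_v$. Since the $i_k$ are ordered, the set of indices on which $\chi_k=1$ forms a contiguous block in $\{1,2,3,4\}$, leaving only $11$ admissible patterns. In the core Case A, where $E$ is disjoint from $\{p_{i_1},\dots,p_{i_4}\}$, the crossing count of $E$ with an arc $(p_{i_a}, p_{i_b})$ is $\chi_a \oplus \chi_b$, so the NC and C totals read $(\chi_1 \oplus \chi_2) + (\chi_3 \oplus \chi_4)$ and $(\chi_1 \oplus \chi_3) + (\chi_2 \oplus \chi_4)$ respectively (and symmetrically for the alternative non-crossing matching). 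A parity shortcut keeps the work light: all three matching totals share parity $\chi_1 + \chi_2 + \chi_3 + \chi_4 \pmod 2$, so only the patterns with $\sum_k \chi_k = 2$ require explicit inspection, and in each of those the C count turns out to be $2$ while the NC count is $0$ or $2$, yielding nonnegative change.

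I would then dispatch the two boundary cases. In Case B, where $E$ shares one endpoint with some $p_{i_k}$, the shared vertex kills any crossing with the matching arc through $p_{i_k}$, so one only needs to inspect the other matching arc; by symmetry it suffices to examine $E = (p_{i_1}, p_c)$ with $c$ lying in each of the five intervals carved out by $p_{i_1},\dots,p_{i_4}$, and every interval gives nonnegative change. Case C, where both endpoints of $E$ lie in the matching set, is immediate: each matching arc (in either NC or C) shares exactly one vertex with $E$, so $E$ contributes $0$ to the matching-versus-external count in both configurations. Summing these nonnegative contributions with the internal $+1$ yields the strict increase claimed by Lemma~\ref{Lemma::intersections}. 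The hard part is the patient bookkeeping of Case A; the parity observation is what keeps the enumeration small, reducing the $11$ admissible patterns to essentially the three with $\sum_k \chi_k = 2$ that require explicit checking.
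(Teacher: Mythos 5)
Your argument is correct, and at bottom it is the same proof as the paper's: since only the two swapped arcs change, everything reduces to checking, for every other edge of the cycle, all possible relative positions with respect to the four matching points. The paper does this check pictorially, by re-drawing the cycle on a circle (where crossings become chord intersections) and listing all topological configurations in Fig.~\ref{Fig::Crossings}; you do it algebraically on the line, via the indicator vector $(\chi_1,\chi_2,\chi_3,\chi_4)$, the criterion that a disjoint edge crosses the arc $(p_{i_a},p_{i_b})$ iff $\chi_a\oplus\chi_b=1$, and the parity observation that cuts the eleven contiguous patterns down to the three with $\sum_k\chi_k=2$. What the paper's circle picture buys is compactness and the manifest cyclic invariance of the crossing relation; what your version buys is a self-contained, checkable enumeration with an explicit reason why only finitely many patterns occur (contiguity of the block of interior points). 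Two spots in your write-up are thin, though neither breaks the proof. In Case B, the ``by symmetry'' reduction to $E=(p_{i_1},p_c)$ needs the cyclic invariance just mentioned (under the cyclic shift of the four points the crossing matching is fixed while the two non-crossing ones are exchanged); the left--right reflection alone only identifies $p_{i_1}$ with $p_{i_4}$ and $p_{i_2}$ with $p_{i_3}$, so otherwise the endpoint-$p_{i_2}$ case must be run through the same five intervals (it works). In Case C, your claim that each matching arc shares exactly one vertex with $E$ fails when $E$ coincides with one of the two crossing arcs, e.g.\ $E=(p_{i_1},p_{i_3})$: there $E$ picks up a crossing with $(p_{i_2},p_{i_4})$, so the contribution goes from $0$ to $1$ rather than staying at $0$ --- the inequality still holds (and such a swap would duplicate an edge of the cycle, so it never occurs in the moves the lemma is applied to), but the stated reason is not quite right.
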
 
\begin{proof}
	This is a topological property we will prove for cases. To best visualize crossings,
	we change the graphical way we use to represent the complete graph that underlies the problem: now the nodes are organized along a circle, in such a way that they are ordered clockwise (or, equivalently, anti-clockwise) according to the natural ordering given by the positions on the segment $[0,1]$. Links between points here are represented as straight lines. It is easy to see that a crossing as defined in Sec.~\ref{sec:p<0} corresponds to, in this picture, a crossing of lines.
	All the possibilities are displayed in Fig.~\ref{Fig::Crossings}, where we have represented with red lines the edges involved in the matching, while the other lines span all the possible topological configurations.
	
	\begin{figure*}[ht]
		\centering
		\includegraphics[width=1\columnwidth]{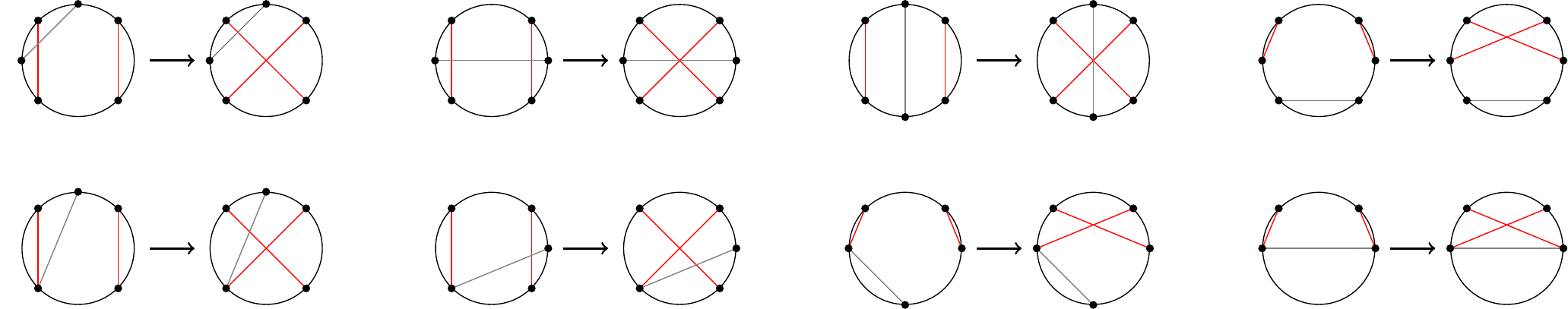}
		\caption{Replacing a non-crossing matching with a crossing one in an Hamiltonian cycle always increase the number of crossings. Here we list all the possible topological configurations one can have.}
		\label{Fig::Crossings}
	\end{figure*}
\end{proof}

Now we can prove Proposition~\ref{Proposition::0<p<1}:
\begin{proof}
	Consider a generic Hamiltonian cycle and draw the connections between the points in the upper half-plane. Suppose to have an Hamiltonian cycle where there are, let us say, $n$ intersections between edges. Thanks to Lemma~\ref{Lemma::0<p<1}, we can swap two crossing arcs with a non-crossing one without splitting the Hamiltonian cycle. As shown in Lemma~\ref{Lemma::intersections}, this operation lowers always the total number of crossings between the edges, and the cost of the new cycle is smaller than the cost of the starting one. Iterating this procedure, it follows that one can find a cycle with no crossings. 
	Now we prove that there are no other cycles out of $h^*$ and its dual with no crossings. This can be easily seen, since $h^*$ is the only cycle that visits all the points, starting from the first, in order. This means that all the other cycles do not visit the points in order and, thus, they have a crossing, due to the fact that the point that is not visited in a first time, must be visited next, creating a crossing.
\end{proof}

\subsection{Proof of Proposition \ref{prop<01}} \label{App::p<0 even N}
To complete the proof given in the main text, we need to discuss two points. Firstly, we address which is the correct move that swap a non-crossing matching with a crossing one; thanks to Lemma~\ref{Lemma::intersections}, by performing such a move one always increases the total number of crossings. Secondly we prove that there is only one Hamiltonian cycle to which this move cannot be applied (and so it is the optimal solution).

We start with the first point: consider an Hamiltonian cycle with a matching that is non-crossing, then the possible situations are the following two:
\begin{figure}[h!]
	\centering
	\includegraphics[width=0.4\columnwidth]{NonCrossingsProof1.pdf} 
\end{figure}
\begin{figure}[h]
	\centering
	\includegraphics[width=0.4\columnwidth]{NonCrossingsProof2.pdf}
\end{figure}

\noindent For the first case there are two possible independent orientations:

\begin{enumerate}
	\item $(r_1r_4C_2r_2r_3C_3)\,,$
	\item $(r_1r_4C_2r_3r_2C_3)\,.$
\end{enumerate}

If we try to cross the matchings in the first cycle, we obtain $(r_1r_3C_3)(r_2[C_2]^{-1}r_4)$, and this is not anymore an Hamiltonian cycle. On the other hand, in the second cycle, the non-crossing matching can be replaced by a crossing one without breaking the cycle: $(r_1r_3[C_2]^{-1}r_4r_2C_3)$. For the second case the possible orientations are:
\begin{enumerate}
	\item $(r_1r_2C_2r_4r_3C_3)\,,$
	\item $(r_1r_2C_2r_3r_4C_3)\,.$
\end{enumerate}
By means of the same procedure used in the first case, one finds that the non-crossing matching in the second cycle can be replaced by a crossing one without splitting the cycle, while in the first case the cycle is divided by this operation.\\

The last step is the proof that the Hamiltonian cycle given in Proposition \ref{prop<01} has the maximum number of crossings.

	Let us consider an Hamiltonian cycle $h[\sigma] = \left( r_{\sigma(1)}, \dots, r_{\sigma(N)} \right)$ on the complete graph $\mathcal{K}_N$. We now want to evaluate what is the maximum number of crossings an edge can have depending on the permutation $\sigma$. Consider the edge connecting two vertices $r_{\sigma(i)}$ and $r_{\sigma(i+1)}$: obviously both the edges $(r_{\sigma(i-1)}, r_{\sigma(i)})$ and $(r_{\sigma(i+1)}, r_{\sigma(i+2)})$ share a common vertex with ($r_{\sigma(i)}, r_{\sigma(i+1)}$), therefore they can never cross it. So, if we have $N$ vertices, each edge has $N-3$ other edges that can cross it. Let us denote with $\mathcal{N}\left[\sigma(i)\right]$ the number of edges that cross the edge  $(r_{\sigma(i)}, r_{\sigma(i+1)})$ and let us define the sets:
\be
A_j:=
\begin{cases}
	\{r_k\}_{k=\sigma(i)+1\Mod{N},\dots,\sigma(i+1)-1\Mod{N}}  & \hbox{for \, } j=1\\
	\{r_k\}_{k=\sigma(i+1)+1\Mod{N},\dots,\sigma(i)-1\Mod{N}}  & \hbox{for \, } j=2\\	 
\end{cases}
\label{Sets}
\ee
These two sets contain the points between $r_{\sigma(i)}$ and $r_{\sigma(i+1)}$. In particular, the maximum number of crossings an edge can have is given by:
\be	
\max(\mathcal{N}\left[\sigma(i)\right])=
\begin{cases}
	2\min_{j}|A_j|  & \hbox{for \, } |A_1|\not = |A_2|\\
	2|A_1| - 1  & \hbox{for \, } |A_1| = |A_2|\\	
\end{cases}\label{max}
\ee
This is easily seen, since the maximum number of crossings an edge can have is obtained when all the points belonging to the smaller between $A_1$ and $A_2$ contributes with two crossings. This cannot happen when the cardinality of $A_1$ and $A_2$ is the same because at least one of the edges departing from the nodes in $A_1$ for example, must be connected to one of the ends of the edge $(r_{\sigma(i)}, r_{\sigma(i+1)})$, in order to have an Hamiltonian cycle. Note that this case, i.e. $|A_1| = |A_2|$ can happen only if $N$ is even.

Consider the particular case such that $\sigma(i)=a$ and $\sigma(i+1)=a+\frac{N-1}{2}\pmod{N}$ or $\sigma(i+1)=a+\frac{N+1}{2}\pmod{N}$. Then \eqref{max} in this cases is exactly equal to $N-3$, which means that the edges $(r_a, r_{a+\frac{N-1}{2}\pmod{N}})$ and $(r_a, r_{a+\frac{N+1}{2}\pmod{N}})$ can have the maximum number of crossings if the right configuration is chosen.\\
Moreover, if there is a cycle such that every edge has $N-3$ crossings, such a cycle is unique, because the only way of obtaining it is connecting the vertex $r_a$ with $r_{a+\frac{N-1}{2}\pmod{N}}$ and $r_{a+\frac{N+1}{2}\pmod{N}}, \forall a$.\\

\section{The 2-factor and TSP solution for $p<0$ and even $N$} \label{B}

We start considering here the 2-factor solution for $p<0$ in the even-$N$ case. After that, we proof Proposition~\ref{TSP_p<0}. 

In the following we will say that, given a permutation $\sigma \in \mathcal{S}_{N}$, the edge $(r_{\sigma(i)}, r_{\sigma(i+1)})$ has length $L \in \mathbb{N}$ if: 
\be
L=\mathcal{L}(i):=\min_j|A_j(i)| 
\ee
where $A_j(i)$ was defined in equation (\ref{Sets}).

\subsection{$N$ is a multiple of 4}
Let us consider the sequence of points $\mathcal{R} = \{r_i\}_{i=1,\dots,N}$ of $N$ points, with $N$ a multiple of 4, in the interval $[0,1]$, with $r_1 \le \dots \le r_N$, consider the permutations $\sigma_j$, $j=1, 2$ defined by the following cyclic decomposition:
\begin{subequations}\label{N|4}
\begin{equation}
\begin{split}
\sigma_1 & = (r_1, r_{\frac{N}{2}+1}, r_2, r_{\frac{N}{2}+2}) \dots (r_a, r_{a+\frac{N}{2}}, r_{a+1}, r_{a+\frac{N}{2}+1}) \dots(r_{\frac{N}{2}-1}, r_{N-1}, r_{\frac{N}{2}}, r_{N}) \\
\end{split}
\end{equation}
\begin{equation}
\begin{split}
\sigma_2 & = (r_1, r_{\frac{N}{2}+1}, r_N, r_{\frac{N}{2}})\dots(r_a, r_{a+\frac{N}{2}}, r_{a-1}, r_{a+\frac{N}{2}-1}) \dots(r_{\frac{N}{2}-1}, r_{N-1}, r_{\frac{N}{2}-2}, r_{N-2})
\end{split}
\end{equation}
\end{subequations}
for integer $a = 1, \dots,  \frac{N}{2}-1$. Defined $h^*_1:=h[\sigma_1]$ and $h^*_2:=h[\sigma_2]$, it holds the following:

\begin{pros}\label{b1}
	$h^*_1$ and $h^*_2$ are the 2-factors that contain the maximum number of crossings between the arcs.
\end{pros}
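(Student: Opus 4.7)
My plan is to establish matching upper and lower bounds on the total number of crossings over all 2-factors, following a strategy analogous to the proof of Proposition~\ref{prop<01}.

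First I would prove the upper bound. For an arbitrary 2-factor $F$, the per-edge formula~\eqref{max} gives $\mathcal{N}(e) \leq N-3$ if $e$ is diametrical (i.e.\ $|A_1| = |A_2| = N/2-1$) and $\mathcal{N}(e) \leq 2 L_e \leq N-4$ otherwise, where $L_e := \min(|A_1|, |A_2|)$. Because every vertex has a unique partner at cyclic distance $N/2$, the number $D$ of diametrical edges of $F$ satisfies $D \leq N/2$. Summing over the $N$ edges of the 2-factor,
\be
	2\, C(F) \,=\, \sum_e \mathcal{N}(e) \,\leq\, D(N-3) + (N-D)(N-4) \,=\, N(N-4) + D \,\leq\, N^2 - \frac{7 N}{2} ,
\ee
so $C(F) \leq N^2/2 - 7 N / 4$. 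Equality forces three conditions: $D = N/2$, so $F$ must contain the entire diametrical matching $M_1 := \{(r_i, r_{i+N/2})\}_{i=1}^{N/2}$; each of the remaining $N/2$ edges has length exactly $N/2-2$, i.e.\ is near-diametrical of the form $(r_i, r_{i+N/2 \pm 1})$; and every edge must saturate its per-edge bound in~\eqref{max}, which translates into explicit local constraints on how the edges at the vertices of $A_1$ and $A_2$ are distributed.

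Next I would verify that $h^*_1$ and $h^*_2$ attain the bound. The permutations in~\eqref{N|4} are constructed so that each 4-cycle of $h^*_k$ is supported on a quadruple of the form $\{r_a, r_{a \pm 1}, r_{a + N/2}, r_{a + N/2 \pm 1}\}$, consisting of two diametrical and two near-diametrical edges; summing over the $N/4$ disjoint 4-cycles gives exactly $N/2$ edges of each type. A direct inspection using~\eqref{Sets} shows that the saturation condition is met for every edge, yielding $C(h^*_k) = N^2/2 - 7 N / 4$ for $k = 1, 2$.

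Finally, for uniqueness I would argue that any 2-factor attaining the bound decomposes as $M_1 \cup M_2$, with $M_2$ a perfect matching whose edges all lie in the auxiliary 2-regular graph $H$ that joins every vertex $v$ to both $v + N/2 - 1$ and $v + N/2 + 1 \pmod N$. The orbit of $v \mapsto v + N/2 + 1$ has length equal to $N/\gcd(N/2+1,N)$, which for $N$ divisible by $4$ is exactly $N$; hence $H$ is a single Hamiltonian cycle on $N$ vertices, and it admits exactly two perfect matchings, namely the two sets of alternating edges. Comparing these matchings against~\eqref{N|4} identifies them with $h^*_1$ and $h^*_2$, so no other 2-factor can attain the maximum. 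The main obstacle will be the uniqueness step: ruling out 2-factors that substitute a shorter edge for a near-diametrical or diametrical one requires careful use of the saturation conditions in~\eqref{max}, and matching the two alternating perfect matchings of $H$ to the specific structures in~\eqref{N|4} requires careful book-keeping of the $\pm 1$ offsets across the $N/4$ constituent 4-cycles.
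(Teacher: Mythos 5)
Your proposal is correct, and while the first half runs parallel to the paper, your uniqueness argument is genuinely different. The paper also starts from the per-edge bound \reff{max} (note that, like the paper, you are applying a bound derived there for Hamiltonian cycles to general 2-factors; this is harmless since the argument only uses 2-regularity and the absence of doubled edges), and it also concludes that a maximizer must consist of the full diametrical matching plus $N/2$ edges of length $\frac{N}{2}-2$, with the attainment by $h^*_1,h^*_2$ left to inspection in both treatments. Where you diverge is the uniqueness step: the paper fixes the diametrical matching together with one length-$\left(\frac{N}{2}-2\right)$ edge, say $(r_1,r_{N/2})$, and propagates forced choices, showing that the wrong continuation $(r_2,r_{N/2+1})$ cascades into a final null-length edge $(r_{N-1},r_N)$, so that the 4-vertex loops of \reff{N|4} are built up step by step. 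You instead observe that the residual edges must form a perfect matching of the circulant graph joining $v$ to $v\pm\left(\frac{N}{2}+1\right) \pmod N$, compute $\gcd\left(N,\frac{N}{2}+1\right)=1$ when $4\mid N$ so that this graph is a single even cycle, and use the fact that an even cycle has exactly two perfect matchings; since $h^*_1$ and $h^*_2$ are two distinct 2-factors of this form, they must be the two maximizers, with no case analysis needed. Your route also makes the optimal crossing number explicit, $C_{\max}=\frac{N^2}{2}-\frac{7N}{4}$, via the clean double count $2C(F)=\sum_e \mathcal{N}(e)\le D(N-3)+(N-D)(N-4)$ with $D\le N/2$, and it isolates exactly where the hypothesis $4\mid N$ enters (the gcd computation). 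What the paper's propagation argument buys in exchange is that it adapts directly to the companion case $N\equiv 2 \pmod 4$ of Proposition~\ref{b2}, where the auxiliary circulant graph is no longer a single cycle and your matching count would have to be reworked.
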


\begin{proof}
	   
	An edge can be involved, at most, in $N-3$ crossing matchings. In the even N case, this number is achieved by the edges of the form $(r_a, r_{a+\frac{N}{2}\pmod{N}})$, i.e. by the edges of length $\frac{N}{2}-1$. There can be at most $\frac{N}{2}$ edges of this form in a 2-factor. Thus, in order to maximize the number of crossings, the other $\frac{N}{2}$ edges must be of the form $(r_a, r_{a+\frac{N}{2}+1\pmod{N}})$ or $(r_a, r_{a+\frac{N}{2}-1\pmod{N}})$, i.e. of length $\frac{N}{2}-2$. It is immediate to verify that both $h^*_1$ and $h^*_2$ have this property; we have to prove they are the only ones with this property.\\
	Consider, then, to have already fixed the $\frac{N}{2}$ edges $(r_a, r_{a+\frac{N}{2}\pmod{N}}),   \forall a\in [N]$. Suppose to have fixed also the edge $(r_1, r_{\frac{N}{2}})$ (the other chance is to fix the edge $(r_1, r_{\frac{N}{2}+2})$: this brings to the other 2-factor). Consider now the point $r_{\frac{N}{2}+1}$: suppose it is not connected to the point $r_N$, but to the point $r_2$, i.e., it has a different edge from the cycle $h^*_2$. We now show that this implies it is not possible to construct all the remaining edges of length $\frac{N}{2}-2$. Consider, indeed, of having fixed the edges $(r_1, r_{\frac{N}{2}})$ and $(r_2, r_{\frac{N}{2}+1})$ and focus on the vertex $r_{\frac{N}{2}+2}$: in order to have an edge of length $\frac{N}{2}-2$, this vertex must be connected either with $r_1$ or with $r_3$, but $r_1$ already has two edges, thus, necessarily, there must be the edge $(r_{\frac{N}{2}+2}, r_3)$. By the same reasoning, there must be the edges $(r_{\frac{N}{2}+3}, r_4)$, $(r_{\frac{N}{2}+4}, r_5), \dots, (r_{N-2}, r_{\frac{N}{2}-1})$. Proceeding this way, we have constructed $N-1$ edges; the remaining one is uniquely determined, and it is $(r_{N-1}, r_{N})$, which has null length.\\
	Therefore the edge $(r_2, r_{\frac{N}{2}+1})$ cannot be present in the optimal 2-factor and so, necessarily, there is the edge $(r_{\frac{N}{2}+1}, r_N)$; this creates the cycle $(r_1, r_{\frac{N}{2}}, r_N, r_{\frac{N}{2}+1})$. Proceeding the same way on the set of the remaining vertices $\{r_2, r_3,\dots,r_{\frac{N}{2}-1}, r_{\frac{N}{2}+2},\dots, r_{N-1}\}$, one finds that the only way of obtaining $\frac{N}{2}$ edges of length $\frac{N}{2}-1$ and $\frac{N}{2}$ edges of length $\frac{N}{2}-2$ is generating the loop coverings of the graph $h^*_1$ or $h^*_2$.
\end{proof}

Proposition \ref{b1}, together with the fact that the optimal 2-factor has the maximum number of crossing matchings, guarantees that the optimal 2-factor is either $h^*_1$ or $h^*_2$.

\begin{figure*}[ht]
	\begin{subfigure}[t]{0.49\linewidth}
		\centering
		\includegraphics[width=1\columnwidth]{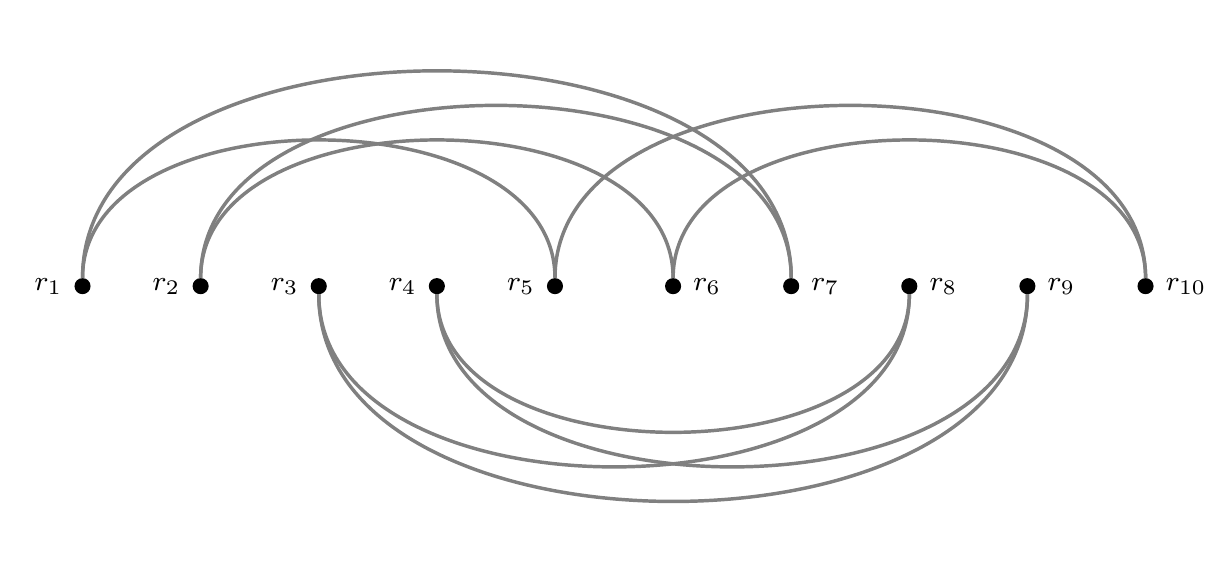}
		\caption{One of the optimal 2-factor solutions for $N=10$ and $p<0$; the others are obtainable cyclically permuting this configuration}
		\label{Fig::2factor_N10_line}
	\end{subfigure} \hfill
	\begin{subfigure}[t]{0.49\linewidth}
		\centering
		\includegraphics[width=0.5\columnwidth]{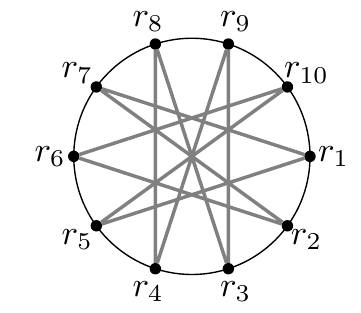}
		\caption{The same optimal 2-factor solution, but represented on a circle, where the symmetries of the solutions are more easily seen} \label{Fig::2factor_N10_circle}
	\end{subfigure}
	\caption{}
\end{figure*}

\subsection{$N$ is not a multiple of 4}
Let us consider the usual sequence $\mathcal{R} = \{r_i\}_{i=1,\dots,N}$ of $N$ points, with even $N$ but not a multiple of 4, in the interval $[0,1]$, with $r_1 \le \dots \le r_N$, consider the permutation $\pi$ defined by the following cyclic decomposition:
\be
\begin{split}\label{Nnot|4}
\pi & =(r_1, r_{\frac{N}{2}}, r_N, r_{\frac{N}{2}+1}, r_2, r_{\frac{N}{2}+2})(r_3, r_{\frac{N}{2}+3}, r_4, r_{\frac{N}{2}+4}) \dots(r_{\frac{N}{2}-2}, r_{N-1}, r_{\frac{N}{2}-1}, r_{N-2})
\end{split}
\ee
Defined 
\be
\pi_k(i):=\pi(i)+k \Mod N, \; k\in [0, N-1] \label{permut}
\ee
and
\be
h^*_k:=h[\pi_k]
\ee
the following proposition holds:
\begin{pros}\label{b2}
	$h^*_k$ are the 2-factors that contain the maximum number of crossings between the arcs.
\end{pros}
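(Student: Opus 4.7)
The plan is to mirror the proof of Proposition~\ref{b1}, with the crucial adaptation that when $N\equiv 2\pmod{4}$ no 2-factor can simultaneously contain all $N/2$ diameter edges $(r_a,r_{a+N/2})$ together with $N/2$ edges of length $N/2-2$. As a consequence, the maximum-crossing configurations must miss exactly one diameter and compensate by placing two extra length-$(N/2-2)$ edges at the orphaned endpoints, which is precisely the structure of $h_k^*$; the freedom in choosing which diameter is missing produces the $N/2$ cyclic shifts $\pi_k$. Exactly as in Proposition~\ref{b1}, the starting bound is~\reff{max}: a diameter attains at most $N-3$ crossings (the case $|A_1|=|A_2|=N/2-1$), an edge of length $N/2-2$ at most $N-4$, and shorter edges strictly fewer, so maximizing the total number of crossings forces one to place as many diameters and near-diameters as possible.

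The new combinatorial obstacle is the following. View the $N$ vertices cyclically and form the graph $G$ whose edges are the pairs at circular distance $N/2-1$ (i.e.\ all candidate length-$(N/2-2)$ edges); this graph is $2$-regular. When $N/2$ is odd, the iteration $i\mapsto i+(N/2-1)\pmod{N}$ has orbit length $N/\gcd(N,N/2-1)=N/2$, so $G$ decomposes into two vertex-disjoint cycles of odd length $N/2$. Since a $2$-regular graph admits a perfect matching if and only if all its component cycles have even length, $G$ admits no perfect matching, and hence the $N/2$ diameters cannot be completed into a 2-factor using only length-$(N/2-2)$ edges. Any 2-factor containing all $N/2$ diameters must therefore include at least one edge strictly shorter than $N/2-2$, and a direct count using~\reff{max} shows that the resulting total crossing count is strictly smaller than that of a 2-factor with $N/2-1$ diameters and $N/2+1$ length-$(N/2-2)$ edges.

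To finish, I would verify by direct inspection that each $h_k^*=h[\pi_k]$ has precisely this structure: $N/2-1$ diameters, with the missing one being the cyclic shift of $(r_1,r_{1+N/2})$ by $k$, and $N/2+1$ edges of length $N/2-2$, each attaining its maximum crossing count prescribed by~\reff{max}. Uniqueness then proceeds by the same greedy argument as in Proposition~\ref{b1}: once the missing diameter is chosen, both of its endpoints are forced to carry two length-$(N/2-2)$ edges, and propagating this constraint around the cycle determines the remainder of $\pi_k$, leaving exactly $N/2$ competing 2-factors which are the cyclic rotations of $\pi$. The hardest step is the odd-cycle obstruction of the second paragraph: it is the qualitatively new ingredient distinguishing this case from Proposition~\ref{b1}, and it is ultimately responsible for the disorder-dependence of the optimal TSP solution in Proposition~\ref{TSP_p<0}.
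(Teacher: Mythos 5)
Your proposal is correct and, in its overall architecture, parallels the paper's proof: both start from the per-edge bound \reff{max} (a diameter $(r_a,r_{a+N/2\pmod{N}})$ admits at most $N-3$ crossings, a length-$(N/2-2)$ edge at most $N-4$, shorter edges strictly fewer), both reduce the question to comparing the two competing length profiles --- all $N/2$ diameters plus one necessarily shorter edge, versus $N/2-1$ diameters plus $N/2+1$ near-diameters, whose crossing-number upper bounds are $N^2-\tfrac{7N}{2}-2$ and $N^2-\tfrac{7N}{2}-1$ respectively --- and both finish with the same forcing argument: once the missing diameter is chosen, its two orphaned endpoints must carry two near-diameters each, and propagating this constraint determines the whole 2-factor, giving exactly the $N/2$ rotations $h^*_k$ (your identification of the missing diameter of $h[\pi_k]$ as the shift of $(r_1,r_{1+N/2})$ by $k$ is correct). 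The genuinely different ingredient is your justification of the key infeasibility step, namely that for $N\equiv 2\pmod 4$ the $N/2$ diameters cannot be completed using near-diameters only: the paper imports the constructive forcing of Proposition~\ref{b1}, under which such a configuration would have to decompose into 4-vertex loops, impossible unless $4\mid N$; you instead note that completing the diameters requires a perfect matching in the $2$-regular circulant graph of near-diameters, which, since $\gcd(N,N/2-1)=2$ and $N/2$ is odd, splits into two odd cycles and therefore has none. This parity obstruction is cleaner and makes the role of $4\nmid N$ transparent, whereas the paper's propagation has the side benefit of producing explicitly the loop structure (one 6-vertex loop plus $\tfrac{N-6}{4}$ 4-vertex loops) that is reused in the proof of Proposition~\ref{TSP_p<0}. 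Your ``direct count'' of the two bounds and the check that every edge of $h^*_k$ saturates \reff{max} are asserted rather than carried out, but they coincide with the computations the paper performs explicitly, so no substantive gap remains.
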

\begin{proof}
	Also in this case the observations done in the proof of Proposition~\ref{b1} holds. Thus, in order to maximize the number of crossing matchings, one considers, as in the previous case, the $\frac{N}{2}$ edges of length $\frac{N}{2}-1$, i.e. of the form $(r_a, r_{a+\frac{N}{2}\pmod{N}})$, and then tries to construct the remaining $\frac{N}{2}$ edges of length $\frac{N}{2}-2$, likewise the previous case. Again, if one fixes the edge $(r_1, r_{\frac{N}{2}})$, the edge $(r_2, r_{\frac{N}{2}+1})$ cannot be present, by the same reasoning done in the proof of Proposition B.1. The fact that, in this case, $N$ is not a multiple of 4 makes it impossible to have a 2-factor formed by 4-vertices loops, as in the previous case. The first consequence is that, given $\frac{N}{2}$ edges of length $\frac{N}{2}-1$, it is not possible to have $\frac{N}{2}$ edges of length $\frac{N}{2}-2$. In order to find the maximum-crossing solution, one has the following options:
	\begin{itemize}
		\item to take a 2-factor with $\frac{N}{2}$ edges of length $\frac{N}{2}-1$, $\frac{N}{2}-1$ edges of length $\frac{N}{2}-2$ and one edge of length $\frac{N}{2}-2$: in this case the theoretical maximum number of crossing matchings is $\frac{N(N-3)}{2}+(\frac{N}{2}-1)(N-4)+N-6=N^2-\frac{7N}{2}-2$;
		\item to take a 2-factor with $\frac{N}{2}-1$ edges of length $\frac{N}{2}-1$, $\frac{N}{2}+1$ edges of length $\frac{N}{2}-2$: in this case the theoretical maximum number of crossing matchings is $(\frac{N}{2}-1)(N-3)+(\frac{N}{2}+1)(N-4)=N^2-\frac{7N}{2}-1$.
	\end{itemize}
Clearly the second option is better, at least in principle, than the first one. The cycles $h^*_k$ belong to the second case and saturate the number of crossing matchings. Suppose, then, to be in this case. Let us fix the $\frac{N}{2}-1$ edges of length $\frac{N}{2}-1$; this operation leaves two vertices without any edge, and this vertices are of the form $r_a$, $r_{a+\frac{N}{2}\pmod{N}}, a\in[1,N]$ (this is the motivation for the degeneracy of solutions). By the reasoning done above, the edges that link this vertices must be of length $\frac{N}{2}-2$, and so they are uniquely determined. They form the 6-points loop $(r_a$, $r_{a-1+\frac{N}{2}\pmod{N}}$, $r_{N-1+a\pmod{N}}$, $r_{a+\frac{N}{2}\pmod{N}}$, $r_{a+1\pmod{N}}$, $r_{a+1+\frac{N}{2}\pmod{N}})$. The remaining $N-6$ points, since $4|(N-6)$, by the same reasoning done in the proof of Proposition~\ref{b1}, necessarily form the $\frac{N-6}{4}$ 4-points loops given by the permutations~\reff{permut}.
\end{proof}

Proposition \ref{b2}, together with the fact that the optimal 2-factor has the maximum number of crossing matchings, guarantees that the optimal 2-factor is such that $h^*\in \{h^*_k\}_{k=1}^N$. 

\subsection{Proof of Proposition~\ref{TSP_p<0}}
\begin{proof}
	Let us begin from the permutations that define the optimal solutions for the 2-factor, that is those given in Eqs.~\ref{N|4} if is $N$ a multiple of 4 and in Eq.~\ref{Nnot|4} otherwise.
	In both cases, the optimal solution is formed only by edges of length $\frac{N}{2}-1$ and of length $\frac{N}{2}-2$. Since the optimal 2-factor is not a TSP, in order to obtain an Hamiltonian cycle from the 2-factor solution, couples of crossing edges need to became non-crossing, where one of the two edges belongs to one loop of the covering and the other to another loop. Now we show that the optimal way of joining the loops is replacing two edges of length $\frac{N}{2}-1$ with other two of length $\frac{N}{2}-2$.
	Let us consider two adjacent 4-vertices loops, i.e. two loops of the form:
	\be
	(r_a, r_{a+\frac{N}{2}}, r_{a+1}, r_{a+\frac{N}{2}+1}), (r_{a+2}, r_{a+2+\frac{N}{2}}, r_{a+3}, r_{a+\frac{N}{2}+3})
	\ee
	and let us analyze the possible cases:
	\begin{enumerate}
		\item to remove two edges of length $\frac{N}{2}-2$, that can be replaced in two ways:
		\begin{itemize}
			\item either with an edge of length $\frac{N}{2}-2$ and one of length $\frac{N}{2}-4$; in this case the maximum number of crossings decreases by 4;
			\item or with two edges of length $\frac{N}{2}-3$; also in this situation the maximum number of crossings decreases by 4.
		\end{itemize}
		\item to remove one edge of length $\frac{N}{2}-2$ and one of length $\frac{N}{2}-1$, and also this operation can be done in two ways:
		\begin{itemize}
			\item either with an edge of length $\frac{N}{2}-2$ and one of length $\frac{N}{2}-3$; in this case the maximum number of crossings decreases by 3;
			\item or with an edge of length $\frac{N}{2}-3$ and one of length $\frac{N}{2}-4$; in this situation the maximum number of crossings decreases by 7.
		\end{itemize}
		\item the last chance is to remove two edges of length $\frac{N}{2}-1$, and also this can be done in two ways:
		\begin{itemize}
			\item either with two edges of length $\frac{N}{2}-3$; here the maximum number of crossings decreases by 6;
			\item or with two edges of length $\frac{N}{2}-2$; in this situation the maximum number of crossings decreases by 2. This happens when we substitute two adjacent edges of length $\frac{N}{2}-1$, that is, edges of the form $(r_a,r_{\frac{N}{2}+a\pmod{N}})$ and $(r_{a+1},r_{\frac{N}{2}+a+1\pmod{N}})$, with the non-crossing edges $(r_a,r_{\frac{N}{2}+a+1\pmod{N}})$ and $(r_{a+1},r_{\frac{N}{2}+a\pmod{N}})$
		\end{itemize}
	\end{enumerate}
	The last possibility is the optimal one, since our purpose is to find the TSP with the maximum number of crossings, in order to conclude it has the lower cost. Notice that the cases discussed above holds also for the 6-vertices loop and an adjacent 4-vertices loop when N is not a multiple of 4. We have considered here adjacent loops because, if they were not adjacent, then the difference in maximum crossings would have been even bigger.\\
	Now we have a constructive pattern for building the optimal TSP. Let us call $\mathcal{O}$ the operation described in the second point of (3). Then, starting from the optimal 2-factor solution, if it is formed by $n$ points, $\mathcal{O}$ has to be applied $\frac{N}{4}-1$ times if N is a multiple of 4 and $\frac{N-6}{4}$ times otherwise. In both cases it is easily seen that $\mathcal{O}$ always leaves two adjacent edges of length $\frac{N}{2}-1$ invariant, while all the others have length $\frac{N}{2}-2$. The multiplicity of solutions is given by the $\frac{N}{2}$ ways one can choose the two adjacent edges of length $\frac{N}{2}-1$. In particular, the Hamiltonian cycles $h^*_k$ saturates the maximum number of crossings that can be done, i.e., every time that $\mathcal{O}$ is applied, exactly 2 crossings are lost.\\
	We have proved, then, that $h^*_k$ are the Hamiltonian cycles with the maximum number of crossings. Now we prove that any other Hamiltonian cycle has a lower number of crossings. Indeed any other Hamiltonian cycle must have
	\begin{itemize}
		\item either every edge of length $\frac{N}{2}-2$;
		\item or at least one edge of length less than or equal to $\frac{N}{2}-3$.
	\end{itemize} 
	This is easily seen, since it is not possible to build an Hamiltonian cycle with more than two edges or only one edge of length $\frac{N}{2}-1$ and all the others of length $\frac{N}{2}-2$. It is also impossible to build an Hamiltonian cycle with two non-adjacent edges of length $\frac{N}{2}-1$ and all the others of length $\frac{N}{2}-2$: the proof is immediate.
	Consider then the two cases presented above: in the first case the cycle (let us call it $H$) is clearly not optimal, since it differs from $h^*_k, \forall k$ by a matching that is crossing in $h^*_k$ and non-crossing in $H$. Let us consider, then, the second case and suppose the shortest edge, let us call it $b$, has length $\frac{N}{2}-3$: the following reasoning equally holds if the considered edge is shorter. The shortest edge creates two subsets of vertices: in fact, called $x$ and $y$ the vertices of the edge considered and supposing $x<y$, there are the subsets defined by:
	\be
	A=\{r\in \mathcal{V}: x<r<y\}
	\ee
	\be
	B=\{r\in \mathcal{V}: r<x \vee r>y\}
	\ee
	Suppose, for simplicity, that $|A|<|B|$: then, necessarily $|A|=\frac{N}{2}-3$ and $|B|=\frac{N}{2}+1$. As an immediate consequence, there is a vertex in $B$ whose edges have both vertices in $|B|$. As a consequence, fixed an orientation on the cycle, one of this two edges and $b$ are obviously non-crossing and, moreover, have the right relative orientation so that they can be replaced by two crossing edges without splitting the Hamiltonian cycle. Therefore also in this case the Hamiltonian cycle considered is not optimal.
\end{proof}

\section{General distribution of points} \label{app::C}

In this section we shall consider a more general distribution of points.

Let choose the points in the interval $[0,1]$ according to the distribution $\rho$, which has no zero in the interval, and let
\be
\Phi_\rho(x) = \int_0^x dt\, \rho(t)
\ee
its {\em cumulative}, which is an increasing function with $\Phi_\rho(0) =0 $ and $ \Phi_\rho(1) = 1$.

In this case, the probability of finding the $l$-th point in the interval $[x, x+ dx]$ and the $s$-th point in the interval $[y, y+ dy]$ is given, for $s>l$  by
\begin{equation}
\begin{split}
p_{l,s}(x, y) \,d\Phi_\rho(x)\, d\Phi_\rho(y) = & \, \frac{\Gamma(N+1)}{\Gamma(l) \, \Gamma(s-l) \, \Gamma(N-s+1)} \\
& \, \Phi_\rho^{l-1}(x)  \left[\Phi_\rho(y)-\Phi_\rho(x)\right]^{s-l-1}\left[1-\Phi_\rho(y)\right]^{N-s} \\
& \, \theta(y-x) \,d\Phi_\rho(x)\, d\Phi_\rho(y)
\end{split}
\end{equation}
We have that, in the case $p>1$ 
\be
\begin{split}
	\overline{E_N[h^*]} = & \int \! d\Phi_\rho(x)\, d\Phi_\rho(y) \, (y-x)^p  \left[ p_{1,2}(x, y) + p_{N-1,N}(x, y) + \sum_{l=1}^{N-2} p_{l,l+2}(x, y)\right]\,
\end{split}
\ee
and
\begin{equation}
\begin{split}
\sum_{l=1}^{N-2} p_{l,l+2}(x, y) 
= \; & \frac{\Gamma(N+1)}{\Gamma(N-2)} \, \left[1-\Phi_\rho(y)+\Phi_\rho(x)\right]^{N-3} \left[\Phi_\rho(y)-\Phi_\rho(x)\right] \, \theta(y-x) 
\end{split}
\end{equation}
while
\begin{equation}
\begin{split}
&\left[p_{1,2}(x, y) + p_{N-1,N}(x, y)\right] 
= \frac{\Gamma(N+1)}{\Gamma(N-1)} \, \left[\left(1-\Phi_\rho(y)\right)^{N-2} + \Phi_\rho^{N-2}(x) \right] \, \theta(y-x) 
\end{split}
\end{equation}
For large $N$ we can make the approximation
\begin{equation}
\begin{split}
 \overline{E_N[h^*]} & \approx 
N^3\, \int\! d\Phi_\rho(x)\, d\Phi_\rho(y) \, (y-x)^p  \\
& \times \left[1-\Phi_\rho(y)+\Phi_\rho(x)\right]^{N} \left[\Phi_\rho(y)-\Phi_\rho(x)\right] \, \theta(y-x)
\end{split}
\end{equation}
and we remark that the maximum of the contribution to the integral comes from the region 
where $\Phi_\rho(y) \approx \Phi_\rho(x)$ and we make the change of variables
\be
\Phi_\rho(y) =  \Phi_\rho(x) + \frac{\epsilon}{N}
\ee
so that
\be
y = \Phi_\rho^{-1}\left[\Phi_\rho(x) + \frac{\epsilon}{N} \right] \approx x + \frac{\epsilon}{N \rho(x)}
\ee
and we get
\begin{align}
\overline{E_N[h^*]}  \approx  &\, 
N^3\, \int\, d\Phi_\rho(x) \int_0^\infty \frac{d\epsilon}{N} \left[\frac{\epsilon}{N \rho(x)}\right]^p \frac{\epsilon}{N} \, e^{- \epsilon} = \frac{\Gamma(p+2)}{N^{p-1}} \int dx\, \rho^{1-p}(x) \, .
\end{align}
When $0<p<1$ 
\be
\begin{split}
	\overline{E_N[h^*]} = & \int \! d\Phi_\rho(x)\, d\Phi_\rho(y) \, (y-x)^p \left[ p_{1,N}(x, y) +  \sum_{l=1}^{N-1} p_{l,l+1}(x, y)\right]
\end{split}
\ee
and
\begin{equation}
\begin{split}
&\sum_{l=1}^{N-1} p_{l,l+1}(x, y)  
=  N (N-1)  \, \left[1-\Phi_\rho(y)+\Phi_\rho(x)\right]^{N-2} \, \theta(y-x) 
\end{split}
\end{equation}
while
\begin{equation}
\begin{split}
p_{1,N}(x, y) 
= N (N-1) \, \left[\Phi_\rho(y)- \Phi_\rho(x) \right]^{N-2}\, \theta(y-x) 
\end{split}
\end{equation}
For large $N$ we can make the approximation
\begin{equation}
\begin{split}
\overline{E_N[h^*]}  \approx  &\, N^2 \int \! d\Phi_\rho(x)\, d\Phi_\rho(y) \, (y-x)^p \left[1-\Phi_\rho(y)+\Phi_\rho(x)\right]^{N} \, \theta(y-x) \\
\approx  &\, N^2\, \int\, d\Phi_\rho(x) \int_0^\infty \frac{d\epsilon}{N} \left[\frac{\epsilon}{N \rho(x)}\right]^p \, e^{- \epsilon} =  \frac{\Gamma(p+1)}{N^{p-1}} \int dx\, \rho^{1-p}(x) \, .
\end{split}
\end{equation}
Indeed the other term, for large $N$, gives a contribution 
\be
N^2\, \int (y-x)^p \left[\Phi_\rho(y)-\Phi_\rho(x)\right]^{N} \, \theta(y-x) \,d\Phi_\rho(x)\, d\Phi_\rho(y)
\ee
so that, we will set
\be
\Phi_\rho(y) = 1 - \frac{\epsilon}{N} \, , \qquad \Phi_\rho(x) = \frac{\delta}{N}\, , \qquad y-x \approx 1
\ee
and therefore we get a contribution
\be
\int_0^\infty d\epsilon \, e^{-\epsilon} \, \int_0^\infty d\delta \, e^{-\delta} =1\, 
\ee
which is of the same order of the other term only at $p=1$.

\section{Calculation of the second moment of the optimal cost distribution} \label{app::2Moment}
In this Appendix we compute the second moment of the optimal cost distribution. We will restrict for simplicity to the $p>1$ case, where
\be
E_N[h^*]=|r_{2}-r_1|^p+|r_{N}-r_{N-1}|^p+\sum_{i=1}^{N-2}|r_{i+2}-r_{i}|^p \,.
\label{OptimalCost}
\ee
We begin by writing the probability distribution for $N$ ordered points
\be
\rho_N(r_1,\dots,r_N)=N! \prod_{i=0}^{N}\theta(r_{i+1}-r_i)
\ee
where we have defined $r_0\equiv0$ and $r_{N+1}\equiv 1$. The joint probability distribution of their spacings
\begin{equation}
\varphi_i \equiv r_{i+1} - r_i \,,
\end{equation}
is, therefore
\begin{equation}
\rho_N(\varphi_0, \dots , \varphi_{N}) = N! \, \delta\left[ \sum_{i=0}^{N} \varphi_i = 1\right]\, \prod_{i=0}^{N} \theta(\varphi_i)\, .
\end{equation}
If $\{i_1, i_2, \dots, i_k\}$ is a generic subset of $k$ different indices in $\{0,1,\dots,N\}$, we soon get the marginal distributions
\begin{equation}
\label{DistributionSpacings}
\rho_N^{(k)} (\varphi_{i_1}, \dots , \varphi_{i_k}) = \frac{N!}{(N-k)!} \left( 1 - \sum_{n=1}^{k}\varphi_{i_n} \right)^{N-k} \theta \left( 1 - \sum_{n=1}^{k}\varphi_{i_n} \right) \, \prod_{n=1}^{k}\theta(\varphi_{i_n}) \,.
\end{equation}
Developing the square of \eqref{OptimalCost} one obtains $N^2$ terms, each one describing a particular configuration of two arcs connecting some points on the line. We will denote by $\chi_1$ and $\chi_2$ the length of these arcs; they can only be expressed as a sum of 2 spacings or simply as one spacing. Because the distribution~\eqref{DistributionSpacings} is independent of $i_1$, $\dots$, $i_k$, these terms can be grouped together on the base of their topology on the line with a given multiplicity. All these terms have a weight that can be written as
\begin{equation}
\label{GeneralTerm}
\int_{0}^{1} d \chi_1 \, d \chi_2 \; \chi_1^p \, \chi_2^p \, \rho(\chi_1,\chi_2)
\end{equation}
where $\rho$ is a joint distribution of $\chi_1$ and $\chi_2$. Depending on the term in the square of~\eqref{OptimalCost} one is taking into account, the distribution $\rho$ takes different forms, but it can always be expressed as in function of the distribution~\eqref{DistributionSpacings}. As an example, we show how to calculate $\overline{|r_{3}-r_1|^{p}|r_{4}-r_2|^{p}}$. In this case $\rho(\chi_1, \chi_2)$ takes the form
\begin{multline}
\rho(\chi_1, \chi_2) = \int d\varphi_1 \, d\varphi_2 \, d\varphi_3 \, \rho^{(3)}_N(\varphi_1, \varphi_2, \varphi_3) \delta\left( \chi_1 - \varphi_1 - \varphi_2 \right) \delta\left( \chi_2 - \varphi_2 - \varphi_3 \right) \\
= N(N-1) \left[ (1-\chi_1)^{N-2}\theta(\chi_1) \theta(\chi_2- \chi_1) \theta(1-\chi_2) \right. \\
+ (1-\chi_2)^{N-2}\theta(\chi_2) \theta(\chi_1- \chi_2) \theta(1-\chi_1) \\
\left. - (1-\chi_1-\chi_2)^{N-2} \theta(\chi_1) \theta(\chi_2) \theta(1-\chi_1-\chi_2)  \right]\,,
\end{multline}
that, plugged into~\eqref{GeneralTerm} gives
\begin{equation}
\overline{|r_{3}-r_1|^{p}|r_{4}-r_2|^{p}} = \frac{\Gamma(N+1) \left[\Gamma (2p+3)- \Gamma (p+2)^2\right]}{(p+1)^2 \Gamma (N+2 p+1)} \,.
\end{equation}
All the other terms contained can be calculated the same way; in particular there are 7 different topological configurations that contribute. After having counted how many times each configuration appears in $(E_N[h^*])^2$, the final expression that one gets is
\begin{multline}
\label{SecondMoment}
\overline{(E_N[h^*])^2}=\frac{\Gamma(N+1)}{\Gamma (N+2 p+1)} \Biggl[ 4(N-3) \Gamma (p+2) \Gamma (p+1) \Biggr. \\
 + \left((N-4) (N-3) (p+1)^2-2 N+8\right) \Gamma(p+1)^2 +\\
\left.+\frac{[N (2 p+1)(p+5)-4 p (p+5)-8]\, \Gamma (2 p+1)}{(p+1)} \right] \,.
\end{multline}

\bibliographystyle{unsrt}
\bibliography{AssignmentANDTsp}

\end{document}